\pdfoutput=1
\RequirePackage{rotating}
\RequirePackage{latexml}
\iflatexml
	\documentclass{article}
	\author{%
	Paul Gölz \\ UC Berkeley, Cornell University
	\and Dominik Peters \\ CNRS, LAMSADE, Université Paris Dauphine - PSL
	\and Ariel D. Procaccia \\ School of Engineering and Applied Sciences, Harvard University}
\else
	\documentclass[11pt]{scrartcl}
	\usepackage[a4paper, total={16cm, 24cm}]{geometry}
	\setcapindent{0pt}
	\setkomafont{captionlabel}{\sffamily\bfseries}

	\usepackage{authblk}

	\author[1]{Paul Gölz}
	\author[2]{Dominik Peters}
	\author[3]{Ariel D.\ Procaccia}
	\affil[1]{UC Berkeley, Cornell University}
	\affil[2]{CNRS, LAMSADE, Université Paris Dauphine - PSL}
	\affil[3]{School of Engineering and Applied Sciences, Harvard University}
\fi

\usepackage{booktabs} %
\usepackage[ruled]{algorithm2e} %

\SetAlFnt{\small}
\SetAlCapFnt{\small}
\SetAlCapNameFnt{\small}
\SetAlCapHSkip{0pt}
\IncMargin{-\parindent}

\usepackage{natbib}

\usepackage[pagebackref]{hyperref}
\hypersetup{
	pdfencoding=auto, 
	psdextra,
	colorlinks=true,
	citecolor=green!50!black,
	linkcolor=red!60!black,
	urlcolor=blue!90!black
}

\usepackage{amsmath,amsfonts,amsthm}
\usepackage{xcolor}
\definecolor{brewer1}{HTML}{8dd3c7}
\definecolor{brewer2}{HTML}{ffffb3}
\definecolor{brewer3}{HTML}{bebada}
\definecolor{brewer4}{HTML}{fb8072}
\usepackage{xfrac}
\usepackage{tikz}
\usetikzlibrary{shapes,calc,fit,decorations.pathreplacing}
\usepackage{mathtools,thmtools}
\usepackage[nameinlink,capitalize]{cleveref}
\iflatexml\else
\usepackage{crossreftools}
\pdfstringdefDisableCommands{%
	\let\Cref\crtCref
	\let\cref\crtcref
}
\fi

\usepackage{todonotes}
\usepackage{comment}
\usepackage{dsfont}
\usepackage{bm}
\usepackage{enumerate}

\allowdisplaybreaks

\newcommand{\emdash}{\,---\,}

\newcommand{\pvec}[1]{\vec{#1}\mkern2mu\vphantom{#1}}
\newcommand{\bvn}{Birkhoff--von Neumann}
\newcommand{\onebar}[1]{\overbracket[0.5pt][0pt]{#1}{}\mkern-2.5mu}
\newcommand{\wideBar}[1]{\overbracket[0.5pt][0pt]{\overbracket[0.5pt][0pt]{#1}}}
\newcommand{\twobar}[1]{\wideBar{#1}{}\mkern-2.5mu}
\newcommand{\cut}{\mathit{cut}}
\newcommand{\degr}{\mathit{deg}}
\newcommand{\frdeg}{\mathit{frac}}

\newcommand{\qone}{\lfloor q_i(t\!-\!1) \rfloor}
\newcommand{\qtwo}{\lfloor q_i(t) \rfloor}
\newcommand{\aone}{a_i(t\!-\!1)}
\newcommand{\atwo}{a_i(t)}
\newcommand{\patrue}{$\checkmark$ $\atwo = \aone + 1$}
\newcommand{\pafalse}{\hphantom{$\checkmark$} $\atwo = \aone$}
\newcommand{\pbtrue}{$\checkmark$ $\aone = \qone + 1$}
\newcommand{\pbfalse}{\hphantom{$\checkmark$} $\aone = \qone$}
\newcommand{\pctrue}{$\checkmark$ $\atwo = \qtwo$}
\newcommand{\pcfalse}{\hphantom{$\checkmark$} $\atwo = \qtwo + 1$}

\newcommand{\natsone}{\mathbb{N}}
\newcommand{\natszero}{\mathbb{Z}_{\geq 0}}

\newtheorem{theorem}{Theorem}
\crefname{theorem}{Theorem}{Theorems}
\newtheorem{lemma}{Lemma}
\crefname{lemma}{Lemma}{Lemmas}
\theoremstyle{plain}
\newtheorem{construction}[theorem]{Construction}
\crefname{construction}{Construction}{Constructions}
\newcommand{\bone}{\mathds{1}}

\iflatexml
\title{In This Apportionment Lottery,\\ the House Always Wins}
\else
\title{\vspace{-1cm}In This Apportionment Lottery,\\ the House Always Wins}
\fi

\definecolor{RED}{HTML}{FF0000}

\date{\vspace{-12pt}\sffamily \large Version v2 as accepted by \emph{Operations Research} \\ \medskip June 2024\vspace{-16pt}}

\begin{document}
\maketitle

\begin{abstract}
\small
\emph{Apportionment} is the problem of distributing $h$ indivisible seats across states in proportion to the states' populations. In the context of the US House of Representatives, this problem has a rich history and is a prime example of interactions between mathematical analysis and political practice.
\citet{Grim04} suggested to apportion seats in a \emph{randomized} way such that each state receives exactly their proportional share $q_i$ of seats in expectation (\emph{ex ante proportionality}) and receives either $\lfloor q_i \rfloor$ or $\lceil q_i \rceil$ many seats ex post (\emph{quota}).
However, there is a vast space of randomized apportionment methods satisfying these two axioms, and so we additionally consider prominent axioms from the apportionment literature.
Our main result is a randomized method satisfying quota, ex ante proportionality and \emph{house monotonicity}\emdash{}a property that prevents paradoxes when the number of seats changes and which we require to hold ex post. 
This result is based on a generalization of dependent rounding on bipartite graphs, which we call \emph{cumulative rounding} and which might be of independent interest, as we demonstrate via applications beyond apportionment.
\end{abstract}

\vspace{-0.8cm}
\enlargethispage{0.3cm}
\renewcommand*{\contentsname}{}
{\small
\tableofcontents
}

\section{Introduction}
The Constitution of the United States says that 
\begingroup
\addtolength\leftmargini{-2.65mm}
\begin{quote}
``Representatives [in the US House of Representatives] shall be apportioned among the several States according to their respective numbers, counting the whole number of persons in each State \dots''
\end{quote} 
\endgroup
These ``respective numbers,'' or populations, of the states are determined every decade through the census. For example, on April 1, 2020, the population of the United States was 331,108,434, and the state of New York had a population of 20,215,751.
New York therefore deserves 6.105\% of the 435 seats in the House, which is 26.56 seats, for the next ten years.
The puzzle of apportionment is what to do about New York's 0.56 seat\emdash{}in this round of apportionment it was rounded down to 0, and New York lost its 27th seat.

The mathematical question of how to allocate these fractional seats has riveted the American political establishment since the country's founding~\citep{Szpir10}.
In 1792, Congress approved a bill that would enact an apportionment method proposed by Alexander Hamilton, the first secretary of the treasury and star of the eponymous musical.
If we denote the \emph{standard quota} of state $i$ by $q_i$ ($q_i=26.56$ in the case of New York in 2020), Hamilton's method allocates to each state its \emph{lower quota} $\lfloor q_i \rfloor$ (26 for NY). %
Then, Hamilton's method goes through the states in order of decreasing \emph{residue} $q_i-\lfloor q_i\rfloor$ (0.56 for NY) and allocates an additional seat to each state until all house seats are allocated.

As sensible as Hamilton's method appears, it repeatedly led to bizarre results, which became known as \emph{apportionment paradoxes}.
\begin{description}%
\item[The Alabama paradox:] Using the 1880 census results, the chief clerk of the Census Office calculated the apportionment according to Hamilton's method for all House sizes between 275 and 350, and discovered that, as the size increased from 299 to 300, Alabama lost a seat. %
In 1900, the Alabama paradox reappeared, this time affecting Colorado and Maine. 

\item[The population paradox:] In 1900, the populations of Virginia and Maine were 1,854,184 and 694,466, respectively. Over the following year, the populations of the two states grew by 19,767 and 4,649, respectively. Even though Virginia's growth was larger even relative to its population, Hamilton's method would have transferred a seat from Virginia to Maine. 
\end{description}
Occurrences of these paradoxes caused partisan strife, which is only natural since a state's representatives have a strong personal stake in their state not losing seats.
Both in Congress and the courts, this strife took the form of a tug-of-war over the choice of apportionment method, the size of the House, and the census numbers, driven by the states', parties', and individual representatives' self-interest rather than the public good.

This state of affairs improved in 1941 when Congress adopted an apportionment method that provably avoids the Alabama and population paradoxes, which had been developed by Edward Huntington, a Harvard mathematician, and Joseph Hill, the chief statistician of the Census Bureau.
While the Huntington--Hill method is \emph{house monotone} (i.e., it avoids the Alabama paradox) and \emph{population monotone} (i.e., it avoids the population paradox), it has a different, equally bizarre weakness:
it does not \emph{satisfy quota}, that is, the allocation of some states may be different from $\lfloor q_i \rfloor$ or $\lceil q_i\rceil$.
A striking impossibility result by \citet{BY82} shows that this tension is inevitable: no apportionment method can simultaneously satisfy quota and be population monotone. (We will revisit this result in \cref{sec:popmon} and show that, while Balinski and Young's theorem makes additional implicit assumptions, the incompatibility between quota and population monotonicity continues to hold without these assumptions.)

While the Balinski--Young impossibility is troubling, in our view there is an even larger source of unfairness that plagues apportionment methods, which is rooted in their determinism.
In addition to introducing bias (the Huntington-Hill method disadvantages larger states), deterministic methods often lead to situations where small counting errors can change the outcome.
For example, based on the 2020 census, New York lost its 27th House seat, but it would have kept it had its population count been higher by 89 residents!
Indeed, current projections suggest that New York would have kept its seat were it not for distortions in census response rates~\citep[p.~20]{EMS+21}.
After the 1990 and 2000 censuses, similar circumstances were the basis for lawsuits brought by Massachusetts and Utah.
A second shortcoming of deterministic apportionment methods is a lack of fairness over time: For example, if the states' populations remain static, a state with a standard quota of, say, $1.5$ might receive a single seat in every single apportionment and therefore only receive $2/3$ of its deserved representation. 

To address these issues, an obvious solution is to use randomization in order to realize the standard quota of each state in expectation, as Grimmett proposed in 2004~\citep{Grim04}.
If such a randomized method was used, 89 additional residents would have shifted New York's expected number of seats by a negligible $0.0001$, and the decision between 26 or 27 seats would have been made by an impartial random process, which is less accessible to political maneuvering than, say, the census~\citep{Stone11}.

Grimmett's proposed apportionment method is easy to describe. First, it chooses a random permutation of the states; without loss of generality, that permutation is identity.
Second, it draws $U$ uniformly at random from $[0,1]$, and let $Q_i \coloneqq U + \sum_{j=1}^i q_j$.
Finally, it allocates to each state $i$ one seat for each integer contained in the interval $[Q_{i-1},Q_i)$.
In particular, this implies that the allocation will satisfy quota.

Why this particular method? \citet[p.~302]{Grim04} writes:
\begingroup
\addtolength\leftmargini{-2.65mm}
\begin{quote}
``We offer no justification for this scheme apart from fairness and ease of implementation.''
\end{quote}
\endgroup
Grimmett's method is easy to implement for sure, and what he refers to as ``fairness''\emdash{}realizing the fractional quotas in expectation\emdash{}is arguably a minimal requirement for any randomized apportionment method.
But his two axioms, ``fairness'' and quota, allow for a vast number of randomized methods:
Indeed, after allocating $\lfloor q_i \rfloor$ seats to each agent, the problem of determining which states to round up reduces to so-called ``$\pi$ps sampling'' (``inclusion probability proportional to size''), and dozens of such schemes have been proposed in the literature~\citep{BH83}.
We believe, therefore, that additional criteria are needed to guide the design of randomized apportionment methods. To identify such criteria, we return to the classics: house and population monotonicity.

\subsection{Our Approach and Results}
In this paper, we seek randomized apportionment methods that satisfy natural extensions of house and population monotonicity to the randomized setting.
We want these monotonicity axioms to hold even \emph{ex post}, i.e., after the randomization has been realized.
We find such methods by taking a parameterized class of deterministic methods all of which satisfy the desired ex post axioms (in our case, subsets of population monotonicity, house monotonicity, and quota), and to then randomize over the choice of parameters such that ex ante properties hold (here: ex ante proportionality).
In mechanism design, a similar approach extends strategyproofness to universal strategyproofness~\citep{NR01}.

Guaranteeing monotonicity axioms ex post is helpful for preventing certain kinds of manipulation in the apportionment process.
For instance, say that the census concludes and a randomized apportionment is determined, and only then does a state credibly contest that its population was undercounted (in the courts or in Congress with the support of a majority).
Using an apportionment method without population monotonicity, states might strategically undercount their population in the census and only reveal the true count in case this turns out to be beneficial once the randomness is revealed.
When using a population monotone method, by contrast, any revised apportionment would be made using the same deterministic and population monotone method, which implies that immediately revealing the full population count is a dominant strategy.

In \cref{sec:popmon}, we first show that no such randomized methods exist for population monotonicity.
This impossibility is not due to randomization or ex ante proportionality, but due to the fact that population monotonicity and quota are outright incompatible. Thus, there do not exist suitable deterministic apportionment methods that a randomized apportionment method could randomize over.
That population monotonicity and quota are incompatible is well-known from the Balinski--Young impossibility theorem~\citep{BY82}.
But their proof uses some ``mild'' background conditions 
that are not mild for our randomized purposes since, to provide ex ante proportionality, the randomized method must sometimes prioritize smaller states over larger states with positive probability, which is ruled out by those background conditions.
We are able to prove a stronger version of their theorem, which derives the impossibility with no assumptions other than population monotonicity and quota.
The deterministic apportionment methods that are most commonly used in practice (so called divisor methods, which include the Huntington--Hill method) satisfy population monotonicity but fail quota. So it makes sense to ask whether population monotonicity can be combined with ex ante proportionality (without requiring quota).
We construct such a method, which is reminiscent of the family of divisor methods, except that the so-called ``divisior criterion''~\citep{BY82} is specific to each state and is given by a sequence of Poisson arrivals.

For house monotonicity, we provide in \cref{sec:housemon} a randomized apportionment method that satisfies house monotonicity, quota, and ex ante proportionality.
To obtain this result, we generalize the classic result of \citet{GKP+06} on dependent rounding in a bipartite graph.
We call this method \emph{cumulative dependent randomized rounding} or just \emph{cumulative rounding}~(\cref{thm:cumulative}).
Cumulative rounding allows to correlate dependent-rounding processes in multiple copies of the same bipartite graph such that the result satisfies an additional guarantee across copies of the graph. This guarantee, which we describe in the next paragraph, generalizes the quota axiom of apportionment.
As a side product, our existence proof for house monotonicity provides a new characterization of the deterministic apportionment methods satisfying house monotonicity and quota, which is based on the corner points of a bipartite matching polytope.

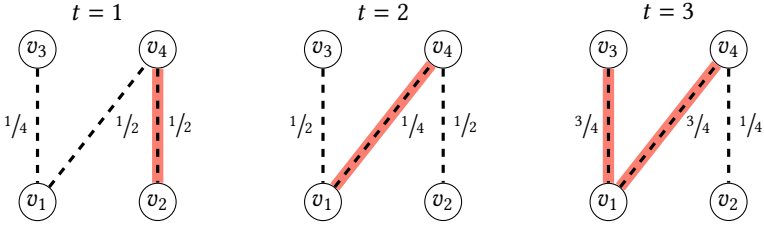
\begin{figure}
    \centering
    \begin{tikzpicture}[node distance=1.5cm]
    \tikzstyle{vertex}=[draw,circle,inner sep=0.5mm];
        \tikzstyle{pot}=[dashed,line width=1.2];

    \foreach \num/\numone/\denone/\numtwo/\dentwo/\numthree/\denthree/\onone/\ontwo/\onthree in {1/1/4/1/2/1/2/0/0/1,2/1/2/1/4/1/2/0/1/0,3/3/4/3/4/1/4/1/1/0} {
    \node at (\num*3.8, 0) (t\num) {$t = \num$};
    
     \node [vertex,xshift=-.8cm,yshift=-.5cm] at (t\num) (v\num{}3) {$v_3$};;
     \node [vertex,xshift=.8cm,yshift=-.5cm] at (t\num) (v\num{}4) {$v_4$};
     \node [vertex,below=of v\num{}3] (v\num{}1) {$v_1$};
     \node [vertex,below=of v\num{}4] (v\num{}2) {$v_2$};

     \if\onone1
     \draw [line width=1.5mm,brewer4](v\num{}1)-- (v\num{}3);
     \fi
     \if\ontwo1
     \draw [line width=1.5mm,brewer4](v\num{}1)-- (v\num{}4);
     \fi
     \if\onthree1
     \draw [line width=1.5mm,brewer4](v\num{}2)-- (v\num{}4);
     \fi

      \draw (v\num{}3) [pot] -- node[left] {$\sfrac{\numone}{\denone}$} (v\num{}1) [pot] --node[right,xshift=1mm] {$\sfrac{\numtwo}{\dentwo}$} (v\num{}4) [pot]--node[right] {$\sfrac{\numthree}{\denthree}$} (v\num{}2); 
    };
\end{tikzpicture}
    \caption{Illustration of cumulative rounding. Dashed lines indicate edges $e \in E$ in the bipartite graph $(V,E)$, which are labeled with weights $w_e^t$. The red lines indicate a possible random outcome of cumulative rounding.}
    \label{fig:cumulativerounding}
\end{figure}
To describe cumulative rounding more precisely, we first sketch the result of \citet{GKP+06}. For a bipartite graph $(V, E)$ and edge weights $\{w_e\}_{e \in E}$ in $[0,1]$, dependent rounding randomly generates a subgraph $(V, E')$ with $E' \subseteq E$ providing three properties: \emph{marginal distribution} (each edge $e \in E$ is contained in $E'$ with probability $w_e$), \emph{degree preservation} (in the rounded graph, the degree of a vertex $v$ is the floor or the ceiling of $v$'s fractional degree $\sum_{v \in e \in E} w_e$), and \emph{negative correlation}.
Cumulative rounding allows us to randomly round $T$ many copies of $(V, E)$, where each copy $1 \leq t \leq T$ has a set of weights $\{w_e^t\}_{e \in E}$. Each copy will provide marginal distribution, degree preservation, and negative correlation.
As we prove in \cref{sec:cumulativeproof}, cumulative rounding additionally guarantees \emph{cumulative degree preservation}: for each vertex $v$ and $1 \leq t \leq T$, the sum of degrees of $v$ across copies $1$ through $t$ equals the sum of fractional degrees of $v$ across copies $1$ through $t$, either rounded up or down.
For example, node $v_1$ in \cref{fig:cumulativerounding} is incident to edges with a total fractional weight of $2 \cdot \sfrac{1}{4} + 2 \cdot \sfrac{1}{2} = 1.5$ across copies $t=1,2$, and must hence be incident to $1$ or $2$ edges in total across the rounded versions of copies $t = 1,2$.
Since, across copies $t=1,2,3$, $v_1$'s total fractional degree is $1.5 + 2 \cdot \sfrac{3}{4} = 3$, $v_1$ must be incident to a total of exactly $3$ rounded edges across the copies $t=1,2,3$.
By applying cumulative rounding to a star graph, we obtain a randomized apportionment method satisfying house monotonicity, quota, and ex-ante proportionality.

We believe that cumulative rounding is of broader interest, and in \cref{sec:applications}, we present applications of cumulative rounding beyond apportionment. 
First, we consider a proposal of \citet{BH09} for reforming the European Commission of the European Union: They propose a weighted lottery to determine which countries nominate commissioners.
Using cumulative rounding to implement this lottery would eliminate two key problems the authors identified in a simulation study, in particular
the possibility that some member states may not nominate any commissioners for a long time.
We also describe how to apply cumulative rounding to round fractional allocations of goods or chores, and we discuss a specific application of assigning faculty to teach courses.

\subsection{Related Work}
\paragraph{Apportionment theory.}
We have already mentioned several seminal works of apportionment theory.
Besides the axiomatic approach~\cite[e.g.,][]{BY75,BY82,BR99,BR14,PPR24}, which has arguably proved the most influential, deterministic apportionment has been extensively studied through the lens of constrained optimization~\cite[e.g.,][]{Huntington28,BH63,Ernst94,Agnew08}, with respect to bias against larger or smaller states~\cite[e.g.,][]{Polya19,BY82,MOP02,LV06,Janson14}, and in multi-dimensional generalizations~\cite[e.g.,][]{BD89a,BD89,MZZ10,CCV22,MV22}.
For a comprehensive treatment of this theory, we refer the reader to \citet{BY82} and \citet{Pukelsheim17}.

\paragraph{Randomized apportionment.}
A na\"ive form of randomized apportionment was suggested by \citet{Balinski93}, who immediately rejected it: ``It is trivial to propose an unbiased method: assign the $h$ seats at random with probabilities proportional to the fair shares. In this case none of the other desirable properties is guaranteed.''
The proposal by \citet{Grim04}, which we discussed above, makes a much stronger case for randomized apportionment by showing one desirable property\emdash{}quota\emdash{}which can, in fact, be guaranteed ex post.
Our work adds house and population monotonicity to the set of achievable properties.

\citet{ALM+19} developed a random rounding scheme as part of a mechanism for strategy-proof peer selection, which they simultaneously proposed as a randomized apportionment method.
Like Grimmett's method, their method satisfies ex ante proportionality and quota.
The main advantage of their method is that its support consists of only linearly (not exponentially) many deterministic apportionments.
This, \citeauthor{ALM+19} argued, is useful in repeated apportionment settings, where one could repeat a periodic sequence of these deterministic apportionments and thereby limit the possibility of selecting the same state much too frequently or much too rarely due to random fluctuations.
If this is the goal, cumulative rounding will arguably give better guarantees (see \cref{sec:europeancommission}).

\citet{HNR+23} proposed pipage rounding~\citep{GKP+06}\,---\,in this case equivalent to pivotal sampling~\citep{DT98}\,---\,as a randomized apportionment method, without pursuing monotonicity.

\citet{CCS+24} proposed a randomized apportionment scheme that circumvents the impossibility from \cref{sec:popmonnegative} by allowing the house size to deviate ex post from its target. Their scheme satisfies ex-ante proportionality, quota, and population monotonicity, along with probabilistic bounds on how far the house size may deviate.
\citet{CCS+24} also provided a conceptually simpler version of our characterization of house monotone and quota-compliant apportionment solutions in \cref{prop:quotatone}.

\citet{EK24} studied a problem closely related to ours, but inspired by affirmative action for faculty hiring in Indian universities.
One can think of a house monotone and quota-compliant apportionment method as an iterative process that, in each time step $t=1, 2, \dots$, allocates the $t$th house seat to one of the states, ensuring that the total number of seats awarded to each state is proportional up to rounding.
In the same way, \citeauthor{EK24} successively allocate a university department's vacancies to demographic groups, and they also aim for quota and ex-ante proportionality.
Their algorithm is essentially equivalent to our randomized apportionment method; instead of randomly rounding a matching, they round a flow that appears similar to the one of \citet{CCS+24}.
By independently randomizing over hiring decisions in each of several departments, the authors immediately obtain concentration bounds, which imply that, in total across departments, demographic groups are likely almost proportionally represented.

\citet{CGS+24} studied randomized apportionment to target\emdash{}in addition to ex ante proportionality and quota\emdash{}monotonicity axioms that are quite different from ours.
The general flavor of these monotonicity axioms is to require that, if the standard quotas of some set $S$ of states weakly increase and the standard quotas of all other states weakly decrease, the probability that the states in $S$ simultaneously receive more seats should weakly increase, i.e., they impose monotonicity on higher-order correlations in the rounding.
None of the apportionment methods they consider satisfy house monotonicity \citep[App.\ A]{CGS+24}.

\paragraph{Fair division.}
Apportionment can be seen as a special case of the fair division of indivisible goods, which has recently received increased attention in operations research~\citep[e.g.,][]{SS22,AFS+23,BKP+23}.
The apportionment setting is characterized by the fact that the goods (i.e., the seats) are interchangeable and that the agents (i.e., the states) are \emph{weighted} (in other words, have different entitlements)~\citep[e.g.,][]{Barbanel96,AMS20}.
Though the focus on interchangeable goods may appear very restrictive at first glance, \cite{CSS21} showed that house monotone apportionment methods, when interpreted as \emph{picking sequences}, induce allocation algorithms for the full setting of weighted fair division.
In particular, an apportionment method satisfying house monotonicity and quota yields a fair-division algorithm satisfying \emph{weighted proportionality up to one good (WPROP1)}~\citep[Prop.\ 4.8]{CSS21}.
Hence, our randomized method in \cref{sec:housemon} can be seen as a randomized picking-sequence algorithm that ensures WPROP1 ex post, and ex ante satisfies that each agent $i$ will get the $t$th pick (for each $t = 1, 2, \dots$) with probability proportional to $i$'s weight.
This latter property is not only an intuitively appealing fairness guarantee in its own right, but also immediately implies \emph{weighted proportionality}~\citep{Barbanel96}.

Our work is part of a larger thrust to develop allocation mechanisms that combine desirable ex-ante and ex-post guarantees, which have been termed \emph{best-of-both-worlds guarantees}~\citep{AFS+23}.
The works of \cite{HZ79} and \cite{BM01} were early precursors to this idea, in the setting of matchings where the space of realizable ex-ante probabilities has a very clean structure~\citep{Birk46,Neumann53}.
\cite{BCKM13} generalized this approach to more general combinatorial constraints.
\cite{AFS+23} first studied classic fair-division axioms in this way, and \cite{BEF22} and \cite{FMN+23} extended this approach to additional fairness axioms and more general valuations.
The corollary in the previous paragraph is a best-of-both-worlds fairness guarantee for weighted fair division; recently, \cite{AGM23} and \cite{HSV23} have obtained such guarantees for this setting.
\paragraph{Randomly rounding bipartite matchings.}
As a consequence of the \bvn{} Theorem~\citep{Birk46,Neumann53}, any fractional matching in a bipartite graph can be implemented as a lottery over integral matchings, in the sense that each edge is present in the random matching with probability equal to its weight in the fractional matching.
One algorithm for rounding a bipartite matching is pipage rounding~\citep{AS04}, which \citet{GKP+06} randomized in their dependent rounding technique. 
This rounding technique is powerful since it can directly accommodate fractional degrees larger than 1
and can provide negative-correlation properties such that Chernoff concentration bounds apply~\citep{PS97}.
The technique of \citeauthor{GKP+06} has found many applications in approximation algorithms~\citep{GKP+06,KMP+09,BGLM+12} 
and in fair division~\citep{SS18a,AN20,CJMW19}.

\paragraph{Just-in-time production.}
\citet{SY93} studied a problem in just-in-time industrial manufacturing: how to alternate between the production of different types of goods in a way that produces each type in specified proportions.
As pointed out by \citet{BCC96} and expanded upon by \citet{BS98}, this problem is related to apportionment. In particular, a production schedule resembles a deterministic, house monotone apportionment method: as the available production time increases by one slot, the schedule needs to decide which type to produce in the next slot.
\citeauthor{SY93} ended up with a property that nearly guarantees quota because they aim to minimize how far the prevalence of types among the goods produced so far deviates from the desired proportions. 
(Most of the literature on this just-in-time production problem minimized other measures of deviation~\citep[see][]{Kubiak93}, which are not connected to quota.)
Now, \citet{SY93} only produce deterministic schedules, and the existence of deterministic house monotone and quota apportionment methods has long been known~\citep{BY75,Still79}. But we believe that the main construction in their proof could be randomized to obtain an alternative proof of \cref{thm:housemono}, without however providing the generality of cumulative rounding. In fact, a similar graph construction to that by \citeauthor{SY93} is randomly rounded within a proof by \citet{GKP+06} to obtain an approximation result about broadcast scheduling.

\section{Model}
\label{sec:model}
Throughout this paper, fix a set of $n \geq 2$ states $N = \{1, 2, \dots, n\}$.
For a given \emph{population profile} $\vec{p} \in \natsone^n$, which assigns a \emph{population} of $p_i \in \natsone$ to each state $i$, and for a \emph{house size} $h \in \natsone$, an apportionment solution deterministically allocates to each state $i$ a number $a_i \in \natszero$ of house seats such that the total number of allocated seats is $h$.
Formally, a \emph{solution} is a function $f : \natsone^n \times \natsone \to \natszero^n$ such that, for all $\vec{p}$ and $h$, $\sum_{i \in N} f_i(\vec{p}, h) = h$.
For a population profile $\vec{p}$ and house size $h$, state $i$'s \emph{standard quota} is $q_i \coloneqq \frac{p_i}{\sum_{i \in N} p_i} \, h$.

Next, we define three axioms for solutions:
\begin{description}%
\item[Quota:] A solution $f$ satisfies quota if, for any $\vec{p}$ and $h$, it holds that $f_i(\vec{p}, h) \in \{\lfloor q_i \rfloor, \lceil q_i \rceil\}$ for all states $i$.
\item[House monotonicity:] A solution $f$ satisfies house monotonicity if, for any $\vec{p}$ and $h$, increasing the house size to $h+1$ does not reduce any state's seat number, i.e., if $f_i(\vec{p}, h) \leq f_i(\vec{p}, h+1)$ for all $i \in N$.
\item[Population monotonicity:] We say that a solution $f$, some $\vec{p}, \pvec{p}' \in \natsone^n$, and some $h, h' \in \natsone$ exhibit a \emph{population paradox} if there are two states $i \neq j$ such that $p_i' \geq p_i$, $p_j' \leq p_j$, $f_i(\pvec{p}', h') < f_i(\vec{p}, h)$, and $f_j(\pvec{p}', h') > f_j(\vec{p}, h)$, or, in words, if state $i$ loses seats and $j$ wins seats even though $i$'s population weakly grew and $j$'s population weakly shrunk.
A solution $f$ is \emph{population monotone} if it exhibits no population paradoxes for any $\vec{p}, \pvec{p}', h, h'$.
By setting $\vec{p} = \pvec{p}'$, one easily verifies that population monotonicity implies house monotonicity.
\end{description}

Note that the apportionment literature often considers two components of the quota axiom separately: \emph{lower quota} (``$f_i(\vec{p},h) \geq \lfloor q_i \rfloor$'') and \emph{upper quota} (``$f_i(\vec{p},h) \leq \lceil q_i \rceil$'').
Also note that our definition of population monotonicity, taken from \citet{RU10}, is slightly weaker than the definition of other authors, whose violation we describe in the introduction.
All results extend to this alternative notion of \emph{relative population mo\-no\-to\-ni\-ci\-ty}~\citep{RU10}: the proof of \cref{prop:quotapopmon} immediately applies, and the proof of  \cref{thm:proppopmon} is easy to adapt.
Our results also continue to apply if one weakens population monotonicity by requiring that $h' = h$.

Finally, we will define randomized apportionment methods.
One potential definition, used by \citet{Grim04}, is a function that for each $\vec{p}$ and $h$ specifies a probability distribution over seat allocations $(a_i)_{i\in N}$. 
For us, an apportionment method is instead a random process that determines an entire (deterministic) apportionment solution, i.e., apportionments for all population profiles $\vec{p}$ and house sizes $h$.
The advantage of this definition is that it allows us to formulate axioms relating these different apportionments.
We will specify our apportionment methods by giving two components: first, a probability distribution for selecting some outcome $\omega$ from some suitable set $\Omega$ of possible outcomes; and, second, the apportionment solution $F^{\omega}$ parameterized by $\omega$.
We will refer to such an apportionment method as $F$, leaving the distribution over $\omega$ implicit.
We treat an apportionment method $F$ as a solution-valued random variable, so that $F(\vec{p}, h)$ refers to the method's random apportionment for $\vec{p}, h$ and $F_i(\vec{p}, h)$ refers to the random number of seats apportioned to state $i$.
(We can ignore the measure-theoretic complications of this statement as long as, for each $\vec{p}, h$, the random apportionment $F^{\omega}(\vec{p}, h)$ is a valid random variable, which is the case for all natural constructions following the two-component structure.)
Our axioms, described in the next paragraph, constrain both the random behavior of $F$ and the consistency of any solution $F^{\omega}$ in $F$'s support across inputs.

A method $F$ satisfies \emph{ex ante proportionality} if, for any $\vec{p}$, $h$ and for any state $i$, $i$'s expected number of seats equals $i$'s standard quota, i.e., if $\mathbb{E}[F_i(\vec{p}, h)] = q_i$, where the expected value is over the random choice of apportionment solution.
A method $F$ satisfies \emph{quota}, \emph{house monotonicity}, or \emph{population monotonicity} if all solutions in the method's support satisfy the respective axiom.
In this paper, we mainly search for apportionment methods that combine quota and ex ante proportionality\,---\,the two axioms obtained by \citet{Grim04}\,---\,with either population or house monotonicity.

\section{Population Monotonicity}
\label{sec:popmon}

\subsection{Population Monotonicity Is Incompatible with Quota}
\label{sec:popmonnegative}
We begin by showing that no apportionment method satisfies population monotonicity, quota, and ex ante proportionality.
In fact, quota and population monotonicity alone are incompatible:
We will show that no solution satisfies these two axioms. Since a method satisfying quota and population monotonicity would be a random choice over such solutions, no such method exists either.

At first glance, the incompatibility of quota and population monotonicity might seem to follow from existing results, but these results implicitly make assumptions that are not appropriate for randomized apportionment.
Indeed, \citet{BY82}, who originally proved this incompatibily, as well as variations of their proof~\citep{RU10,El-Helaly19a} all assume what \citet{RU10} call the \emph{order-preserving property}, i.e., if state $i$ has strictly larger population than state $j$, then $i$ must receive at least as many seats as $j$. This property is usually proved as a consequence of neutrality together with population monotonicity.

While the order-preserving property is reasonable for developing deterministic apportionment methods, it is not desirable for the component solutions of a randomized apportionment method.
This is clear for $h=1$: The order-preserving property would mean that only the very largest state(s) can get a seat with positive probability; by contrast, the strength of randomization is that it allows us to allocate the seat to smaller states with some positive probability.
To our knowledge, the existence of quota and population monotone solutions without the assumption of the order-preserving property was an open problem.

\begin{table}[]
    \centering
    \begin{tabular}{rcrrcrrcrr}
    \toprule
    &\hspace{.95cm}& \multicolumn{2}{c}{profile $\vec{p}^A$} &\hspace{.95cm}& \multicolumn{2}{c}{profile $\vec{p}^B$} &\hspace{.95cm}& \multicolumn{2}{c}{profile  $\vec{p}^C$} \\
    \cmidrule{3-4} \cmidrule{6-7} \cmidrule{9-10}
    	state $i$ && $p_i^A$ & $q_i^A$ && $p_i^B$ & $q_i^B$ && $p_i^C$ & $q_i^C$ \\
    	\midrule
        1 && 824 & 8.24 && 824 & 6.99 && 824 & 9.02 \\
        2 && 44 & 0.44 && 44 & 0.37 && 1 & 0.01 \\
        3 && 44 & 0.44 && 44 & 0.37 && 1 & 0.01 \\
        4 && 44 & 0.44 && 44 & 0.37 && 44 & 0.48 \\
        5 && 44 & 0.44 && 222 & 1.88 && 44 & 0.48 \\
    \bottomrule
    \end{tabular}
    \vspace{2mm}
    \caption{Populations and standard quotas for three population profiles, used in showing that population monotonicity and quota are incompatible. The house size is $h=10$.}
    \label{tab:popmon}
\end{table}
\begin{theorem}
\label{prop:quotapopmon}
No (deterministic) apportionment solution satisfies population monotonicity and quota.
\end{theorem}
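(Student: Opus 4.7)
The plan is to suppose, toward contradiction, that a deterministic solution $f$ satisfies both axioms, and to derive infeasibility by analysing what $f$ outputs on $\vec{p}^A$ together with a handful of auxiliary profiles. Writing $a^X := f(\vec{p}^X, 10)$, quota splits the possibilities for $a^A$ into two broad cases: $a_1^A = 9$ with exactly one of states $2, 3, 4, 5$ receiving a seat, and $a_1^A = 8$ with exactly two of them receiving seats.

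For the case $a_1^A = 9$, let state $k \in \{2, 3, 4, 5\}$ be the one holding the extra seat. I would introduce the auxiliary profile $\vec{p}^{G_k}$ obtained from $\vec{p}^A$ by doubling state $k$'s population to $88$. In $\vec{p}^{G_k}$ quota forces $a_1^{G_k} \leq 8$ (because $q_1^{G_k} < 8$), so state $1$ strictly loses going $\vec{p}^A \to \vec{p}^{G_k}$; pairing this loss with any state $j \in \{2,3,4,5\} \setminus \{k\}$, which has identical population in $A$ and $G_k$ and $a_j^A = 0$, forces $a_j^{G_k} = 0$ by population monotonicity. The sum constraint $a_1^{G_k} + a_k^{G_k} = 10$ then fails since $a_1^{G_k} \leq 8$ and $a_k^{G_k} \leq 1$. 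The sub-case $k = 5$ alternatively follows directly from $\vec{p}^A \to \vec{p}^B$: population monotonicity forces $a_2^B = a_3^B = a_4^B = 0$, leaving $a_1^B + a_5^B \leq 9$ to cover $10$ seats.

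For the case $a_1^A = 8$, let $s_1, s_2 \in \{2, 3, 4, 5\}$ be the two states receiving a seat. I would introduce a ``rotated'' version of $\vec{p}^C$, namely $\vec{p}^{N}$, in which $s_1, s_2$ retain population $44$ while the other two small states are reduced to population $1$; this generalises $\vec{p}^C$ and coincides with it when $\{s_1, s_2\} = \{4, 5\}$. Quota forces $a_1^N \geq 9$, so going $\vec{p}^A \to \vec{p}^N$ state $1$ strictly gains; pairing this gain with either $s_i$ losing its seat (both have equal population in $A$ and $N$ with $a_{s_i}^A = 1$) forces $a_{s_1}^N = a_{s_2}^N = 1$. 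The sum constraint then forces $a_1^N = 8$, contradicting $a_1^N \geq 9$.

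The main obstacle is that the classical Balinski--Young argument relies on an order-preserving property that is a consequence of neutrality and is therefore unavailable in the randomized setting the paper targets. The workaround above sidesteps neutrality by treating each state individually and tailoring an auxiliary profile to each sub-case of $a^A$, at the cost of carrying a larger family of profiles through the proof; the displayed profiles $\vec{p}^A$, $\vec{p}^B$, $\vec{p}^C$ exhibit the two elementary types of population change\,---\,growing one state and shrinking two small states\,---\,that the ``doubled'' and ``rotated'' auxiliary profiles then instantiate across all remaining sub-cases.
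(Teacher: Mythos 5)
Your proof is correct and follows essentially the same route as the paper's: the same starting profile $\vec{p}^A$, the same case split on whether state 1 receives 8 or 9 seats, and the same two types of auxiliary profiles (grow the small state that holds a seat to push state 1's upper quota down to 8; shrink the small states that hold no seat to push state 1's lower quota up to 9), with the population paradox doing the work in each case. The only difference is cosmetic: the paper compresses the sub-cases into a ``without loss of generality'' over relabeled profiles and exhibits the paradox directly, whereas you parameterize the auxiliary profiles explicitly and reach the contradiction contrapositively via the seat-sum constraint.
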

\begin{proof}
Fix a set of $5$ states, and let $f$ be a solution satisfying quota.
We will show that $f$ must violate population monotonicity by analyzing three types of population profiles, which are given in \cref{tab:popmon}, all for house size $h=10$.
The starting profile is $\vec{p}^A$ in this table.
By quota, state 1 must receive either 8 or 9 seats on this profile, but we will show that either choice leads to a violation of population monotonicity:
First, we show that allocating 9 seats implies a violation of population monotonicity with respect to profile $\vec{p}^B$; second, we show that allocating 8 seats contradicts population monotonicity with respect to $\vec{p}^C$. \medskip

\emph{Allocating 9 seats contradicts population monotonicity.}
Suppose that $f_1(\vec{p}^A, 10) = 9$.
Then, the remaining seat must be given to either state 2, 3, 4, or 5.
Without loss of generality, we may assume that $f(\vec{p}^A, 10) = (9, 0, 0, 0, 1)$.

Next, consider the profile $\vec{p}^B$.
Since quota prevents us from allocating more than 7 seats to state~1 or more than 2 seats to state~5, at least one of the states 2, 3, and 4 must receive a seat on $\vec{p}^B$.
Thus, this state's allocation strictly increases from its allocation of zero seats on $\vec{p}^A$, even though the state's population has not changed.
Moreover, state~1 can receive at most 7 seats on this profile by quota, which is strictly below the 9 seats on $\vec{p}^A$, and state 1's population has also remained the same.
But population monotonicity forbids there to be a pair of states with unchanged population, such that one gains a seat and the other loses a seat.
Hence, if state~1 receives 9 seats on $\vec{p}^A$, then $f$ violates population monotonicity. \medskip

\emph{Allocating 8 seats contradicts population monotonicity.}
Now, suppose that $f_1(\vec{p}^A, 10) = 8$.
The remaining two seats must be given to two states out of 2, 3, 4, and 5; without loss of generality, we may assume that $f(\vec{p}^A, 10) = (8, 0, 0, 1, 1)$.

On profile $\vec{p}^C$, quota implies that state~1 receives at least 9 seats\emdash{}strictly more than the 8 given on $\vec{p}^A$ even though the population has not changed.
Given that there is at most one more seat to hand out, at least one state out of 4 and 5 must receive zero seats on $\vec{p}^C$, which is a strict reduction with respect to $\vec{p}^A$ even though the state's population is the same.
Thus, allocating 8 seats to state~1 on $\vec{p}^A$ also leads to a violation of population monotonicity. \medskip

Since both possible choices for $f_1(\vec{p}^A, 10)$ imply a monotonicity violation, no solution can satisfy both quota and population monotonicity.
\end{proof}

The proof of \cref{prop:quotapopmon} uses $n = 5$ states. and can easily be generalized to any larger number of states. For $n = 3$ states, the Sainte-Laguë method (also known as Webster's method) satisfies both properties \citep[Proposition~6.2]{BY82}. Whether impossibility holds for $n = 4$ remains an open question. The original impossibility result of \citet{BY82} (which, in addition, assumes the order-preserving property) uses only $n = 4$ states.

\subsection{A Population Monotone and Ex Ante Proportional (But Not Quota) Method}
\label{sec:popmonpositive}
The incompatibility between population monotonicity and quota leaves open whether there are apportionment methods satisfying population monotonicity and ex ante proportionality.
The answer is positive, and we will now construct a method satisfying both axioms.

Since population monotonicity relates a solution's result for one input with those for infinitely many other inputs, it highly constrains the shape of population monotone solutions.
In fact, under widely assumed regularity conditions, population monotone solutions are exactly characterized~\citep{BY82} by the class of \emph{divisor methods} (for consistency with our terminology, \emph{divisor solutions}), which will inspire our randomized method.
A divisor solution is defined by a \emph{divisor criterion}, which is a monotone increasing function $d : \natszero \to \mathbb{R}_{\geq 0}$.
(For instance, the Huntington-Hill solution is induced by $d(t) \coloneqq \sqrt{t \, (t + 1)}$.)
For a population profile $\vec{p}$ and house size $h$, the divisor solution corresponding to $d$ can be calculated by considering the sets $\{p_i / d(t) \mid t \in \natszero\}$ for each state $i$, determining the $h$ largest values across all sets, and allocating to each state $i$ a number of seats equal to how many of the $h$ largest values came from $i$'s set.

When state populations change, the values $p_i / d(t)$ evolve in a way that ensures population monotonicity: if state $i$ grows and state $j$ shrinks, a value belonging to state $i$ might overtake a value belonging to $j$ that was originally larger, but none of $j$'s values can overtake any of $i$'s values. As a result, $i$'s number of seats cannot decrease if $j$'s number of seats increases.

To avoid the order-preserving property (for reasons described in \cref{sec:popmonnegative}), we will (randomly) choose a different divisor criterion for each state, which leaves the above argument for population monotonicity intact.
The question is how to sample these divisor criteria such that ex ante proportionality holds.
The answer lies in the properties of Poisson arrival processes: Across independent, scaled Poisson processes, which process yields the first arrival is distributed with probabilities that are proportional to the reciprocals of the scaling factors; and, since the interarrival times of the processes are memoryless, the same holds for each subsequent arrival.
These properties allow us to randomly construct a ``generalized divisor solution'' such that the overall distribution satisfies population monotonicity and ex ante proportionality.

\begin{theorem}
There exists an apportionment method $F$ that satisfies population monotonicity and ex ante proportionality.
\label{thm:proppopmon}
\end{theorem}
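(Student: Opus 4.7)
The plan is to define $F$ as a randomized divisor-type method whose signposts for each state are drawn from an independent Poisson process. Concretely, for each state $i \in N$, let $(T_i^k)_{k \geq 1}$ be the arrival times of a rate-$1$ Poisson process on $[0, \infty)$, with the processes independent across $i$; the seed $\omega$ of $F$ consists of this entire collection of arrivals. Given $(\vec{p}, h)$, assign each potential seat a priority $\pi_i^k \coloneqq p_i / T_i^k$, and define $F_i(\vec{p}, h)$ to be the number of $k \geq 1$ whose priority $\pi_i^k$ lies among the top $h$ values of $\{\pi_j^\ell : j \in N, \ell \geq 1\}$. Almost surely the $T_i^k$ are pairwise distinct, so this top-$h$ set is unambiguous and $\sum_i F_i(\vec{p}, h) = h$; on the null event of a tie, $F$ can be set arbitrarily.

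For ex ante proportionality, I would invoke two standard properties of Poisson processes. First, rescaling $(T_i^k)_k$ by $1/p_i$ converts it into the arrival times of a rate-$p_i$ Poisson process, and $\pi_i^k$ lies in the top $h$ exactly when $T_i^k/p_i$ lies among the $h$ smallest values of the rescaled collection. Second, superposing independent rate-$p_i$ Poisson processes yields a rate-$P$ Poisson process, where $P \coloneqq \sum_j p_j$, whose successive arrivals are labelled independently by state $i$ with probability $p_i/P$. Consequently $F_i(\vec{p}, h) \sim \mathrm{Binomial}(h, p_i/P)$ and $\mathbb{E}[F_i(\vec{p}, h)] = h \cdot p_i / P = q_i$.

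Population monotonicity is the heart of the proof, and I would establish it pointwise for every seed with distinct arrival times (a full-measure event). Suppose for contradiction that a seed $\omega$ produces a paradox on profiles $(\vec{p}, h)$ and $(\pvec{p}', h')$: there exist $i \neq j$ with $p_i' \geq p_i$, $p_j' \leq p_j$, $a_i > a_i'$, and $a_j < a_j'$, writing $a_i \coloneqq F_i(\vec{p}, h)$, $a_i' \coloneqq F_i(\pvec{p}', h')$, and similarly for $j$. Since state $i$ receives an $(a_i'+1)$-th seat under $(\vec{p}, h)$ but not under $(\pvec{p}', h')$, and state $j$ receives an $(a_j+1)$-th seat under $(\pvec{p}', h')$ but not under $(\vec{p}, h)$, reading off the top-$h$ and top-$h'$ thresholds gives $p_i/T_i^{a_i'+1} > p_j/T_j^{a_j+1}$ and $p_j'/T_j^{a_j+1} > p_i'/T_i^{a_i'+1}$. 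Combining these with $p_i' \geq p_i$ and $p_j' \leq p_j$ then produces the cyclic chain $p_j'/T_j^{a_j+1} > p_i'/T_i^{a_i'+1} \geq p_i/T_i^{a_i'+1} > p_j/T_j^{a_j+1} \geq p_j'/T_j^{a_j+1}$, the desired contradiction. The hard part is spotting this pair of priorities\emdash{}the classical divisor-method monotonicity argument adapts cleanly, but one has to be careful because the two strict comparisons live under different priority rankings with different population numerators. Using Poisson arrivals is what makes ex ante proportionality collapse to an exact binomial; most other natural continuous signpost distributions would give only approximate proportionality.
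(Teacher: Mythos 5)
Your proposal is correct and is essentially the paper's own construction: the same state-specific Poisson ``signposts'' $T_i^k/p_i$, the same superposition/coloring argument for ex ante proportionality, and a pointwise monotonicity argument for each seed. The only (cosmetic) difference is that you verify population monotonicity via a four-term cyclic chain of priority inequalities, whereas the paper argues by comparing the $h$-th and $h'$-th combined arrival thresholds and doing a two-case analysis; both are standard adaptations of the divisor-method monotonicity proof.
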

\begin{proof}
Which solution is randomly chosen by the method will depend on the values taken on by $n$ independent Poisson arrival processes with rate $1$, which we define as our outcome $\omega$.
We now construct the solution $F^{\omega}$ corresponding to any given $\omega$.
For each state $i$, $\omega$ determines an infinite sequence $0 < x_1^i < x_2^i < \dots$ of arrival times.
We will describe the apportionment given by $F^{\omega}$ on input $\vec{p}$ and $h$, which we illustrate in \cref{fig:poisson}:
First, we divide each arrival time $x_t^i$ by the corresponding state's population, i.e., we set $y_t^i \coloneqq x_t^i / p_i$.
Second, we combine the $y_t^i$ for all $t$ and $i$ in a single arrival sequence $(z_1, i_1), (z_2, i_2), \dots$ labeled with states, i.e., each $(z_j, i_j)$ corresponds to some arrival $y_t^i$ for some $i$ and $t$, such that $z_j = y_t^i$ is the arrival time, $i_j=i$ is the agent label, and the $z_j$ are sorted in increasing order.
Third, we allocate $|\{1 \leq j \leq h \mid i_j = i\}|$ many seats to each state $i$, i.e., a number of seats equal to how many among the $h$ smallest scaled arrival times belonged to $i$'s arrival process.
This specifies the solution $F^{\omega}$, and, hence, the entire apportionment method $F$.
Note that $F^{\omega}$ closely resembles a divisor solution, where state $i$'s set is $\{1 / y_t^i \mid t \in \natszero\} = \{p_i / x_t^i \mid t \in \natszero\}$, i.e., where, for each state $i$, $t \mapsto x_t^i$ plays the role of a state-specific divisor criterion.

\definecolor{myblue}{rgb}{0.0, 0.45, 0.73}
\begin{figure}
    \centering
    \begin{tikzpicture}[yscale=0.6, xscale=1.8,
    axis label/.style={font=\footnotesize, anchor=east},
    point label/.style={font=\footnotesize, anchor=north, %
    }]
    	\node [axis label] at (-0.3,0) {state 1:};
    	\draw [draw=black!50, thick, -stealth] (0,0) -- (5,0);
    	\draw [draw=black, thick, -stealth] (0,-1) -- (5,-1);
    	
    	\node [axis label] at (-0.3,-2.5) {state 2:};
    	\draw [draw=black!50, thick, -stealth] (0,-2.5) -- (5,-2.5);
    	\draw [draw=black, thick, -stealth] (0,-3.5) -- (5,-3.5);
    	
    	\node [axis label] at (-0.3,-5) {combined:};
    	\draw [draw=black, thick, -stealth] (0,-5) -- (5,-5);

    	\foreach \num/\x/\z in {1/1/3,2/3/6,3/4/7} {
    		\node [fill=black!70, inner sep=1.7pt, circle] at (\x, 0) {};
    		\node[point label] at (\x, 0) {$x_\num^1$};
    		
    		\node [fill=orange!90!black, inner sep=1.7pt, circle] at (0.7*\x, -1) {};
    		\node[point label] at (0.7*\x, -1) {$y_\num^1$};
    		
    		 \node [fill=orange!90!black, inner sep=1.7pt, circle] at (0.7*\x, -5) {};
    		 \node[point label] at (0.7*\x, -5) {$z_{\z}\vphantom{z_1^1}$};
    		 
    	};
    	\foreach \num/\x/\z in {1/0.7/1,2/1.5/2,3/3.5/4,4/4.5/5} {
    		\node [fill=black!70, inner sep=2pt] at (\x, -2.5) {};
    		\node[point label] at (\x, -2.5) {$x_\num^2$};
    		
    		\node [fill=myblue, inner sep=2pt] at (0.3*\x, -3.5) {};
    		\node[point label] at (0.3*\x, -3.5) {$y_\num^2$};
    		
    		\node [fill=myblue, inner sep=2pt] at (0.3*\x, -5) {};
    		\node[point label] at (0.3*\x, -5) {$z_{\z}\vphantom{z_1^1}$};
    	};

    	\draw [decorate, thick, decoration = {brace, mirror}] (0.1,-5.9) --  (1.5,-5.9);
    	\node [font=\footnotesize, %
    	] at (0.8,-6.5) {apportionment for $h=5$};
    \end{tikzpicture}
    \caption{Illustration of the population-monotone method in \cref{thm:proppopmon}.}
    \label{fig:poisson}
\end{figure}
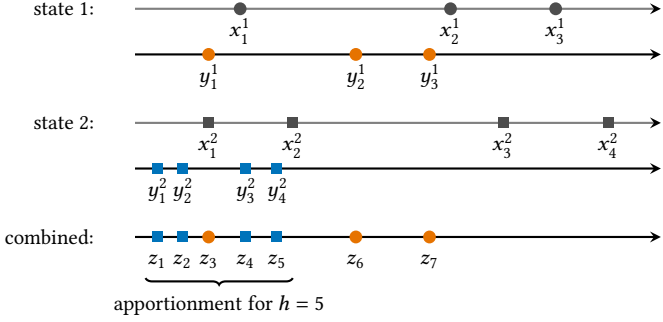

First, we show that $F$ satisfies ex ante proportionality.
For this, fix some $\vec{p}$ and $h$.
Then, the $\{y_t^i\}_{t \geq 1}$ for each $i$ are distributed as the arrival sequences of independent Poisson processes, where $i$'s arrival process has a rate of $p_i$.
As stated by the coloring theorem for Poisson processes~\citep[p.~53]{Kingman93}, our labeled arrival sequence $(z_j, i_j)$ has the same distribution as if we had sampled a Poisson arrival process $0 < z_1 < z_2 < \dots$ with arrival rate $\sum_{i \in N} p_i$ and had drawn each $i_j$ independently, choosing each $i \in N$ with probability proportional to $p_i$.
Since the $z_j$ and $i_j$ are independent in this way, $F(\vec{p}, h)$ is distributed as if sampling $h$ states, with probability proportional to the states' populations and with replacement.
In particular, this implies ex ante proportionality.

It remains to show that $F$ satisfies population monotonicity.
Fix an $\omega$, i.e., the $x_t^i$, as well as two inputs $\vec{p}, h$ and $\pvec{p}', h'$, for which we will show that $F^{\omega}$ does not exhibit a population paradox.
Denoting the inputs' respective variables by $y_t^i, z_j$ and ${y_t^i}', z_j'$, it is easy to see that, for all $i$ for which $p_i' \geq p_i$, ${y_t^i}' \leq y_t^i$ for all $t$, and that, for all $i$ for which $p_i' \leq p_i$, ${y_t^i}' \geq y_t^i$ for all $t$.
Observe that each state $i$ receives a number of seats equal to the number of its scaled arrival times $y_t^i$ (resp., ${y_t^i}'$) that are at most $z_h$ (resp., $z_h'$).

Suppose that $z_h' \geq z_h$ (the reasoning for the case $z_h' \leq z_h$ is symmetric).
Then, whenever $y_t^i \leq z_h$ for a state~$i$ for which $p_i' \geq p_i$, then ${y_t^i}' \leq y_t^i \leq z_h \leq z_h'$, which shows that $i$'s seat number must weakly increase.
One verifies that this rules out a population paradox on $\vec{p},h$ and $\pvec{p}', h'$.
Together with the symmetric argument for $z_h' \leq z_h$, this establishes population monotonicity.
\end{proof}

Clearly, the solutions' resemblance to divisor solutions enabled our proof of population monotonicity.
At the same time, using different ``divisor criteria'' for different states allowed to avoid the order-preserving property, which would have prevented ex ante proportionality as described in \cref{sec:popmonnegative}.
Less satisfying is that, whereas classic divisor criteria satisfy bounds such as $t \leq d(t) \leq t+1$, the ``divisor criteria'' used in the last theorem do not satisfy any such bounds.
As a result, solutions are likely to substantially deviate from proportionality ex post.
An interesting question for future work is whether \cref{thm:proppopmon} can be strengthened to additionally satisfy lower quota or upper quota.

\section{House Monotonicity}
\label{sec:housemon}
While we cannot obtain population monotonicity without giving up on quota, we now propose an apportionment method that combines \emph{house monotonicity} with quota and ex ante proportionality.

\subsection{Examples of Pitfalls}
\label{sec:pitfalls}
An intuitive strategy for constructing a house monotone randomized apportionment method is to do it inductively, seat-by-seat. Thus, we would need a strategy for extending a method that works for all house sizes $h' \le h$ to a method that also works for house size $h+1$. In this section, we give examples suggesting that this does not work, by showing that some reasonable methods cannot be extended without violating quota or ex ante proportionality. This motivates a search for a more ``global'' strategy for constructing a house-monotone method.

\smallskip
\emph{Example 1.} Our first example will show that there are apportionments for a given $h$ that satisfy quota, but that are ``\emph{toxic}'' in that they can never be chosen by a house monotone solution which satisfies quota.
Suppose that we have four states with populations $\vec{p} = (1, 2, 1, 2)$.
The distribution that we will consider is the one given by Grimmett's method \citep{Grim04} (as described in the introduction) for these inputs.
Let $h = 2$. Observe that, if the random permutation chosen by Grimmet's method is identity and if furthermore $U > 2/3$, then Grimmett's method will return the allocation $(1, 0, 1, 0)$.
But we will show that no solution $f$ such that $f(\vec{p}, 2) = (1,0,1,0)$ can satisfy house monotonicity and quota.
Indeed, if $f$ is house monotone, then at least one out of state 2 or state 4 must still be given zero seats by $f$ when $h=3$, but quota requires that both states receive exactly one seat when $h=3$.
It follows that Grimmett's method, the apportionment method of \citet{ALM+19}, or any other method satisfying quota and whose support contains solutions $f$ with $f(\vec{p}, 2) = (1,0,1,0)$, cannot be house monotone.
\smallskip

Thus, a first challenge that any quota and house monotone method must overcome is to never produce a toxic apportionment for a specific $h$ that cannot be extended to all larger house sizes in a house monotone and quota-compliant way.
\citet{Still79} and later \citet{BY79} give a characterization of non-toxic apportionments, but we found no way of transforming this characterization into an apportionment method that would be ex ante proportional.

\smallskip
\emph{Example 2.}
Our second example shows that, even if there are no toxic apportionments in the support of a distribution, the wrong distribution over apportionments might still lead to violations of one of the axioms.
Let there be four states with populations $\vec{p} = (45, 25, 15, 15)$ and let $h=3$; thus, the standard quotas are $(1.35, 0.75, 0.45, 0.45)$. We consider the following distribution over allocations:
\[ \vec{a} = \left\{\begin{array}{@{}ll@{\qquad}ll@{}}
(2, 1, 0, 0) & \text{with probability $35\%$,} &  (1, 1, 0, 1) & \text{with probability $20\%$, and} \\
(1, 1, 1, 0) & \text{with probability $20\%$,} & (1, 0, 1, 1) & \text{with probability $25\%$.}
\end{array}\right.
\]
As we show in \cref{app:pitfalls}, none of these allocations is toxic, and the distribution can be part of an apportionment method in which all three axioms hold for $\vec{p}$ and all $h' \leq 3$.
Nevertheless, we show in the following that any apportionment method $F$ that satisfies house monotonicity and quota and that has the above distribution for $F(\vec{p}, 3)$ must violate ex ante proportionality.
Indeed, fix such an $F$.
On the one hand, note that, for $h=4$, state 2's standard quota is $4 \cdot \frac{25}{100} = 1$, so any quota apportionment must give the state 1 seat.
Since any solution $f$ in the support of $F$ satisfies house monotonicity and quota by assumption, any $f$ such that $f(\vec{p}, 3) = (1, 0, 1, 1)$ must satisfy $f(\vec{p}, 4) = (1, 1, 1, 1)$.
Thus, with at least $25\%$ probability, $F_1(\vec{p}, 4) = 1$.
On the other hand, since state 1's standard quota for $h=4$ is $1.8 \leq 2$, $F_1(\vec{p}, 4) \leq 2$ holds deterministically, by quota.
It follows that $\mathbb{E}[F_1(\vec{p}, 4)] \leq 25\% \cdot 1 + 75\% \cdot 2 = 1.75 < 1.8$, which means that $F$ must violate ex ante proportionality as claimed.
To avoid this kind of conflict between house monotonicity, ex ante proportionality, and quota, the distribution of $F(\vec{p}, 3)$ must allocate at least $5\%$ combined probability to the allocations $(2, 0, 1, 0)$ and $(2, 0, 0, 1)$, which to us is not obvious other than by considering the specific implications on $h=4$ as above.

\subsection{Cumulative Rounding}
\label{sec:cumulativeintro}
The examples of the last section showed that it is difficult to construct house monotone apportionment methods seat-by-seat.
In this section, we develop an approach that is able to explicitly take into account how rounding decisions constrain each other across house sizes.
Our approach will be based on dependent randomized rounding in a bipartite graph that we construct.
First, we state the main theorem by \citet{GKP+06}:
\begin{theorem}[\citeauthor{GKP+06}]
\label{thm:dependent}
Let $(A\cup B, E)$ be an undirected bipartite graph with bipartition $(A, B)$.
Each edge $e \in E$ is labeled with a weight $w_e \in [0, 1]$. For each $v \in A \cup B$, we denote the \emph{fractional degree} of $v$ by $d_v \coloneqq \sum_{v \in e \in E} w_e$.

Then there is a random process, running in $\mathcal{O}\big((|A| + |B|) \cdot |E|\big)$ time, that defines random variables $X_{e} \in \{0,1\}$ for all $e \in E$ such that the following properties hold:
\begin{description}
    \item[Marginal distribution:] for all $e \in E$, $\mathbb{E}[X_{e}] = w_{e}$,
    \item[Degree preservation:] for all $v \in A \cup B$, $\sum_{v \in e \in E} X_e \in \{ \lfloor d_v \rfloor, \lceil d_v \rceil \}$, and
    \item[Negative correlation:] for all $v \in A \cup B$ and $S \subseteq \{e \in E \mid v \in e\}$, $\mathbb{P}[\bigwedge_{e \in S} X_e = 1] \leq \prod_{e \in S} w_e$ and $\mathbb{P}[\bigwedge_{e \in S} X_e = 0] \leq \prod_{e \in S} (1 - w_e)$.
\end{description}
\end{theorem}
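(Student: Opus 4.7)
The plan is to construct a randomized analogue of pipage rounding on the bipartite graph $(A\cup B, E)$. I maintain a current weight vector $(w_e^{(t)})_{e \in E}$, initialized to the given input, and update it through a sequence of small, expectation-preserving random perturbations until every weight lies in $\{0,1\}$; the terminal weights then serve as the indicators $X_e$. The central task is to design these perturbations so that all three guarantees accumulate across iterations.

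Each iteration acts on the subgraph induced by $E' \coloneqq \{e : w_e^{(t)} \in (0,1)\}$. If $E'$ is empty I halt. Otherwise I extract either a cycle or, when $E'$ is a forest, a maximal path, and split its edges into two matchings $M_1, M_2$; bipartiteness makes this partition well-defined since every cycle has even length. Let $\alpha$ and $\beta$ denote the largest positive shifts allowed before either an edge weight reaches $\{0,1\}$ or, in the path case, an endpoint's fractional degree would exit $[\lfloor d_v \rfloor, \lceil d_v \rceil]$. Then I flip a coin: with probability $\beta/(\alpha+\beta)$, I add $\alpha$ to each edge of $M_1$ and subtract $\alpha$ from each edge of $M_2$; with the complementary probability, I perform the symmetric shift of size $\beta$. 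Every edge's expectation is preserved by construction, so iterating yields marginal distribution; each step also saturates at least one edge weight (or, for path steps, one endpoint degree), so the number of iterations and hence the overall running time stays within the claimed bound given a careful implementation of cycle and path search.

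Degree preservation then follows from the invariant that, at every moment, each vertex's fractional degree lies in $[\lfloor d_v \rfloor, \lceil d_v \rceil]$. Cycle steps preserve the invariant trivially, because each vertex on a cycle has one incident edge in $M_1$ and one in $M_2$ and its degree is unchanged. For path steps, only the two endpoints experience a degree perturbation, and $\alpha, \beta$ are defined precisely to clip the shift before the invariant could be violated at either endpoint. Integrality of the terminal weights then forces every vertex's degree to land at $\lfloor d_v \rfloor$ or $\lceil d_v \rceil$.

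The main obstacle is the negative correlation guarantee, which I expect to require the most careful bookkeeping. Fix a vertex $v$ and a subset $S$ of its incident edges. I would prove by induction on the iteration count that $\mathbb{E}\bigl[\prod_{e \in S} w_e^{(t)}\bigr]$ is non-increasing, with the analogous statement for products of $(1-w_e^{(t)})$; since $w_e^{(T)} = X_e$ at termination, this yields both negative correlation inequalities. The per-iteration analysis hinges on the observation that $v$ touches at most two edges of the active cycle or path, at most one in $M_1$ and at most one in $M_2$. When $S$ meets at most one of these, unbiasedness of the step preserves the product's expectation. The interesting case is when $S$ contains both edges $e_1 \in M_1$ and $e_2 \in M_2$: a direct computation gives $\mathbb{E}\bigl[w_{e_1}^{(t+1)} w_{e_2}^{(t+1)} \bigm| \text{state before the step}\bigr] = w_{e_1}^{(t)} w_{e_2}^{(t)} - \alpha\beta$, strictly below the pre-step product, which furnishes the required inequality. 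The complementary inequality for $(1-w_e)$ products follows from the symmetric computation.
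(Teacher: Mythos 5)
Your proposal is correct and follows the same route the paper relies on: this theorem is imported from \citet{GKP+06} rather than proved in the paper, and your reconstruction is exactly the randomized pipage-rounding argument that the paper sketches informally in \cref{sec:cumulativeintro} (alternating cycle/maximal-path shifts of sizes $\alpha,\beta$ chosen with probabilities $\beta/(\alpha+\beta)$ and $\alpha/(\alpha+\beta)$), including the standard $\mathbb{E}[w_{e_1}w_{e_2}] = w_{e_1}w_{e_2}-\alpha\beta$ computation for negative correlation. The only detail worth making explicit is that an endpoint of a maximal path in the fractional forest has exactly one fractional incident edge, which is what guarantees both that your degree invariant can be maintained with strictly positive shifts and that each iteration makes progress.
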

If $X_e = 1$ for an edge $e$, we say that $e$ gets \emph{rounded up}, and if $X_e = 0$ then $e$ gets \emph{rounded down}.
We do not use negative correlation in our apportionment results, but it is crucial in many applications of dependent rounding since it implies that linear combinations of the shape $\sum_{e \in S} a_e \, X_e$ for some $a_e \in [0,1]$ obey Chernoff concentration bounds~\citep{PS97}.

To see the connection to apportionment, let $\vec{p}$ be a population profile.
Then to warm up, the problem of apportioning a single seat can be easily cast as dependent rounding in a bipartite graph:
Indeed, let $A$ consist of a single special node $a$ and let $B$ contain a node $b_i$ for each state $i$. We draw an edge $e = \{a, b_i\}$ with weight $w_{e} = p_i / \sum_{j \in N} p_j$ for each state $i$.
Apply dependent rounding to this star graph. 
Then $a$'s fractional degree of exactly 1 means that, by degree preservation, exactly one edge $\{a, b_i\}$ gets rounded up, which we interpret as the seat being allocated to state $i$.
Moreover, marginal distribution ensures that each state receives the seat with probability proportional to its population.
This shows that randomized rounding can naturally express ex ante proportionality, which will become a useful building block in the following.

Next, we will expand our construction to multiple house seats, and to satisfying house monotonicity across different house sizes.
The most natural way is to duplicate the star graph from the last paragraph, once per house size $h = 1, 2, \dots$ with nodes $a^h$, $\{b_i^h\}_{i \in N}$ and edges $\big\{\{a^h, b_i^h\}\big\}_{i \in N}$.
(In this intuitive exposition, we will not consider any explicit upper bound on the house sizes. Our formal result in \cref{thm:housemono} will round a finite graph but this will suffice to obtain house monotonicity for all house sizes $h \in \natsone$.)
If $\{a^h, b_i^h\}$ gets rounded up in the $h$-th copy of the star graph, we interpret this as the $h$-th seat going to state $i$.
In other words, we determine how many seats get apportioned to state $i$ for a house size $h$ by counting how many edges $\{a^{h'}, b_i^{h'}\}$ got rounded up across all $h' \leq h$.
This construction automatically satisfies house monotonicity, and satisfies ex ante proportionality by the marginal distribution property, but may violate quota by arbitrary amounts.

To explain how randomized rounding might be useful for guaranteeing quota, let us give a few details on how \citeauthor{GKP+06}'s \emph{pipage rounding} procedure randomly rounds a bipartite graph.
In each step, pipage rounding selects either a cycle or a maximal path consisting of edges with \emph{fractional} weights in $(0, 1)$.
The edges along this cycle or path are then alternatingly labeled ``even'' or ``odd'', which is possible because each cycle in a bipartite graph has an even number of edges.
Depending on a biased coinflip and appropriate numbers $\alpha, \beta > 0$, the algorithm either (1) increases all odd edge weights by $\alpha$ and decreases all even edge weights by $\alpha$, or (2) decreases all odd edge weights by $\beta$ and increases all even edge weights by $\beta$.
In this process, more and more edge weights become zero or one, which determines the $X_e$ once no fractional edges remain.

The cycle/path rounding steps in pipage rounding represent an opportunity to couple the seat-allocation decisions across $h$, in a way that ultimately will allow us to guarantee quota.
In our current graph consisting of disjoint stars, there are no cycles and the maximal paths are always pairs of edges $\{a^h, b_i^h\}, \{a^h, b_j^h\}$ for two states $i, j$ and some $h$.
Thus, pipage rounding correctly anti-correlates the decision of giving the $h$-th seat to state $i$ and the decision of giving the $h$-th seat to state $j$, but decisions for different seats remain independent.
To guarantee quota, increasing (resp., decreasing) the probability of the $h$-th seat going to state $i$ should also decrease (resp., increase) the probability of some nearby seats $h'$ going to state $i$ and increase (resp., decrease) the probability of seats $h'$ going to some other state $j$.
The difficulty is to choose these $h'$ and $j$ to provide quota, which is particular tricky since, in the course of running pipage rounding, some of the edge weights will be rounded to zero and one and no longer be available for paths or cycles.

Not only are we able to use pipage rounding to guarantee quota, but we will do so through a general construction that adds quota-like guarantees to an arbitrary instance of repeated randomized rounding; we refer to this technique as \emph{cumulative rounding}. In the following statement, the ``time steps'' $t$ take the place of our possible house sizes $h$.

\begin{theorem}
\label{thm:cumulative}
Let $(A \cup B, E)$ be an undirected bipartite graph.
For each time step $t=1, \dots, T$, consider a set of edge weights $\{w_e^t\}_{e \in E}$ in $[0, 1]$ for this bipartite graph.
For each $v \in A \cup B$ and $1 \leq t \leq T$, we denote the fractional degree of $v$ at time $t$ by $d_v^t \coloneqq \sum_{v \in e \in E} w_e^t$.

Then there is a random process, running in $\mathcal{O}(T^2 \cdot \big(|A| +|B|) \cdot |E|\big)$ time, that defines random variables $X_{e}^t \in \{0,1\}$ for all $e \in E$ and $1 \leq t \leq T$, such that the following properties hold for all $1 \leq t \leq T$. Let $D_v^t \coloneqq \sum_{v \in e \in E} X_e^t$ denote the random degree of $v$ at time $t$.
\begin{description}
    \item[Marginal distribution:] for all $e \in E$, $\mathbb{E}[X_{e}^t] = w_{e}^t$,
    \item[Degree preservation:] for all $v \in A \cup B$, $D_v^t \in \{ \lfloor d_v^t \rfloor, \lceil d_v^t \rceil \}$,
    \item[Negative correlation:] for all $v \in A \cup B$ and $S \subseteq \{e \in E \mid v \in e\}$, $\mathbb{P}[\bigwedge_{e \in S} X_e^t = 1] \leq \prod_{e \in S} w_e^t$ and $\mathbb{P}[\bigwedge_{e \in S} X_e^t = 0] \leq \prod_{e \in S} (1 - w_e^t)$,
    \item[Cumulative degree preservation:] for all $v \in A \cup B$, $\sum_{t' = 1}^t D_v^{t'} \in \{ \lfloor \sum_{t'=1}^{t} d_v^{t'} \rfloor, \lceil \sum_{t' = 1}^{t} d_v^{t'} \rceil \}$.
\end{description}
\end{theorem}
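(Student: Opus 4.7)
The plan is to reduce cumulative rounding to a single application of ordinary dependent rounding (\cref{thm:dependent}) on an auxiliary bipartite graph $G^\star$ built from $(A \cup B, E)$ and the weights $\{w_e^t\}_{e \in E,\, t \in [T]}$. The guiding idea is that each desired guarantee about the $X_e^t$ should be extractable by invoking a \emph{single} property from \cref{thm:dependent} at an appropriate vertex of $G^\star$: the marginal-distribution and per-step statements at a time-stamped copy of the original vertex, and the cumulative-degree statement at an auxiliary ``tracker'' vertex whose fractional degree is engineered to equal the partial sum $\sum_{t'\le t} d_v^{t'}$ up to an integer offset.

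Concretely, I would take the vertex set of $G^\star$ to be time-stamped copies $\{v^t : v \in A \cup B,\, t \in [T]\}$ together with tracker vertices $\{c_v^t : v \in A \cup B,\, 1 \le t \le T\}$ placed on the side of the bipartition opposite to $v$. The edges are (i) one copy of each original edge per time step, $\{a^t, b^t\}$ with weight $w^t_{\{a,b\}}$, and (ii) auxiliary ``chain'' edges incident to the $c_v^t$ whose weights are chosen so that the rounded degree of $c_v^t$ can be decoded as $\sum_{t'\le t} D_v^{t'}$ plus a deterministic integer. Applying \cref{thm:dependent} to $G^\star$ yields Bernoulli random variables on every edge, and $X_e^t$ is defined to be the indicator of the time-$t$ copy of $e$. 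Given the construction, marginal distribution is immediate, per-step degree preservation and negative correlation transfer from \cref{thm:dependent}'s guarantees at $v^t$ (restricted to the original-edge part of $v^t$'s neighborhood), and cumulative degree preservation follows from the degree guarantee at $c_v^t$. Since $G^\star$ would have $\Theta(T\cdot(|A|+|B|))$ vertices and $\Theta(T\cdot|E|)$ edges, the running time stated in \cref{thm:dependent} yields the claimed $\mathcal{O}(T^2(|A|+|B|)|E|)$ bound.

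The main obstacle is engineering the chain edges in (ii) so that the rounded degree of $c_v^t$ genuinely tracks the partial sum $\sum_{t'\le t} D_v^{t'}$ rather than merely some unrelated count of auxiliary edges rounded up. This requires coupling each original-edge decision at $v^{t'}$ with a designated chain-edge decision at $c_v^t$ for every $t \ge t'$ in a way that is rigid enough to force agreement but flexible enough that pipage rounding can still reach a fully integral outcome. Keeping $G^\star$ bipartite, every weight in $[0,1]$, and the auxiliary fractional degrees genuinely fractional (so that the degree-preservation guarantee is not trivialised) is where I expect most of the effort to go; depending on how tightly the coupling has to be enforced, it may be necessary to partially re-open the pipage-rounding proof rather than invoke \cref{thm:dependent} strictly as a black box. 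A secondary subtlety is verifying that extending $v^t$'s neighborhood with auxiliary edges does not weaken negative correlation on the original edges, which should hold because the negative-correlation axiom only quantifies over subsets $S$ of original edges at $v^t$ and the corresponding bound in \cref{thm:dependent} already involves only those edges.
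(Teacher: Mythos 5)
Your high-level architecture coincides with the paper's: build one auxiliary bipartite graph containing $T$ time-stamped copies of $(A\cup B,E)$ plus gadget vertices with carefully engineered fractional degrees, invoke \cref{thm:dependent} once as a black box, and read off $X_e^t$ from the time-$t$ copy of $e$. Marginal distribution, per-step degree preservation, negative correlation, and the running-time bound all transfer exactly as you say. However, the entire substance of the theorem lives in the part you defer as ``the main obstacle,'' and the specific mechanism you sketch for it would not work. You propose a tracker vertex $c_v^t$ for each $(v,t)$ and a coupling of ``each original-edge decision at $v^{t'}$ with a designated chain-edge decision at $c_v^t$ for every $t\ge t'$.'' Dependent rounding gives you no way to force two distinct edges to be rounded identically\,---\,its only rigidity comes from degree preservation at vertices with \emph{integer} fractional degree, and its correlation guarantees at a shared vertex point in the opposite direction (negative correlation). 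Moreover, a tracker whose neighborhood must encode all of $D_v^1,\dots,D_v^t$ would need $\Theta(t)$ gadget edges per $(v,t)$, i.e.\ $\Theta(T^2)$ gadget edges per original vertex, which already breaks your own edge count of $\Theta(T\cdot|E|)$.

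The paper's construction (\cref{constr}) avoids both problems by propagating the cumulative count \emph{indirectly} through a telescoping chain $v^{0:1}-\onebar{v}^1-v^{1:2}-\onebar{v}^2-\cdots$, with each $\onebar{v}^t$ also linked to $v^t$ via an intermediate node $\twobar{v}^t$. The only weights used are fractional parts of $d_v^t$ and of the partial sums $\sum_{t'\le t} d_v^{t'}$, chosen so that every gadget vertex has integer fractional degree (\cref{lem:fractionaldegrees}); degree preservation at these vertices then forces a system of combinatorial identities, and cumulative degree preservation is extracted not by reading a single vertex's degree but by a chained counting (``pivoting'') argument that telescopes $\sum_{t'\le t} D_v^{t'}$ down to $\lfloor \sum_{t'\le t} d_v^{t'}\rfloor + 1 - \bone\{\hat{X}_{\{\onebar{v}^t, v^{t:t+1}\}}\}$. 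No re-opening of the pipage-rounding proof is needed. So: right skeleton, but the gadget and the decoding argument\,---\,which are the theorem\,---\,are absent, and the coupling you describe is not realizable within dependent rounding as stated.
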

The first three properties could be achieved by applying \cref{thm:dependent} in each time step independently. Cumulative rounding correlates these  rounding processes such that cumulative degree preservation (a generalization of quota) is additionally satisfied.

\subsection{House Monotone, Quota-Compliant, and Ex Ante Proportional Apportionment}
Before we prove \cref{thm:cumulative}, we will explain how cumulative rounding can be used to construct an apportionment method that is house monotone and satisfies quota and ex ante proportionality.

None of these three axioms connects the outcomes at different population profiles $\vec{p}$ and so it suffices to consider them independently. Thus, let us fix a population profile $\vec{p}$. Denote the total population by $p \coloneqq \sum_{i \in N} p_i$.
The behavior of a house monotone solution on inputs with profile $\vec{p}$ and arbitrary house sizes can be expressed through what we call an \emph{infinite seat sequence}, an infinite sequence $\alpha = \alpha_1, \alpha_2, \dots$ over the states $N$.
We will also define \emph{finite seat sequences}, which are sequences $\alpha = \alpha_1, \dots, \alpha_p$ of length $p$ over the states.
Either sequence represents that, for any house size $h$ (in the case of a finite seat sequence: $h \leq p$), the sequence apportions $a_i(h) \coloneqq |\{1 \leq h' \leq h \mid \alpha_{h'} = i\}|$ seats to each state $i$. %
We can naturally express the quota axiom for seat sequences:
$\alpha$ satisfies \emph{quota} if, for all $h$ ($h \leq p$ if $\alpha$ is finite) and all states $i$, we have $a_i(h) \in \{\lfloor h \, p_i / p \rfloor, \lceil h \, p_i / p \rceil\}$.

The main obstacle in obtaining a house monotone method via cumulative rounding is that we can only apply cumulative rounding to a finite number $T$ of copies, whereas the quota axiom must hold for all house sizes $h \in \natsone$.
However, it turns out that for our purposes of satisfying quota, we can treat the allocation of seats $1, 2, \dots, p$ separately from the allocation of seats $p+1, \dots, 2\,p$, the allocation of seats $2\,p+1, \dots 3\,p$, and so forth.
The reason is that, when $h$ is a multiple $k \, p$ of $p$ (for some $k \in \natsone$), each state $i$'s standard quota is an integer $k \, p_i$. Thus, any solution that satisfies quota is forced to choose exactly the allocation $(k \, p_1, \dots, k \, p_n)$ for house size $h$.
At this point, the constraints for satisfying quota and house monotonicity reset to what they were at $h = 1$. We make this precise in the following lemma, proved in \cref{app:housemonoproofs}:

\begin{restatable}{lemma}{lemrepetition}
\label{lem:repetition2}
An infinite seat sequence $\alpha$ satisfies quota iff it is the concatenation of infinitely many finite seat sequences $\beta^1, \beta^2, \beta^3, \dots$ of length $p$ each satisfying quota, i.e.,
\[\alpha = \beta^1_1, \beta^1_2, \dots, \beta^1_p, \beta^2_1, \beta^2_2, \dots, \beta^2_p, \beta^3_1, \dots. \]
\end{restatable}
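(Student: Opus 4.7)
The plan is to exploit the arithmetic fact that, at every house size $h = k \, p$ with $k \in \mathbb{N}_+$, each state's standard quota $h \, p_i / p = k \, p_i$ is an integer, so quota forces $a_i(k\,p) = k\,p_i$ with no slack. This single observation rigidly partitions any quota-satisfying infinite seat sequence into blocks of length $p$, each of which must contain exactly $p_i$ copies of state~$i$, and will drive both directions of the equivalence.

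For the forward direction, I would assume $\alpha$ satisfies quota and define $\beta^k$ to be the subsequence $\alpha_{(k-1)\,p+1}, \dots, \alpha_{k\,p}$. The observation above gives $a_i((k-1)\,p) = (k-1)\,p_i$ and $a_i(k\,p) = k\,p_i$, so each $\beta^k$ is of length $p$ and hits state $i$ exactly $p_i$ times. To verify that $\beta^k$ itself satisfies quota, take any $h \in \{1, \dots, p\}$ and write the state-$i$ count within the block as $a_i^{\beta^k}(h) = a_i((k-1)\,p + h) - (k-1)\,p_i$. Since $\alpha$ satisfies quota at house size $(k-1)\,p + h$ and $(k-1)\,p_i$ is an integer that may be pulled out of the floor and ceiling, this gives $a_i^{\beta^k}(h) \in \{\lfloor h\,p_i/p \rfloor, \lceil h\,p_i/p \rceil\}$ as required.

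For the reverse direction, suppose $\alpha = \beta^1 \beta^2 \cdots$ with each $\beta^k$ of length $p$ and quota-compliant. Fix any $h \in \mathbb{N}_+$ and write $h = (k-1)\,p + r$ with $k \in \mathbb{N}_+$ and $0 \leq r < p$. Because each earlier block is quota-compliant, evaluating that block's quota at position $p$ forces $a_i^{\beta^{k'}}(p) = p_i$ for every $k' < k$, hence $a_i(h) = (k-1)\,p_i + a_i^{\beta^k}(r)$. Quota for $\beta^k$ bounds the second summand by $\{\lfloor r\,p_i/p \rfloor, \lceil r\,p_i/p \rceil\}$, and pulling the integer $(k-1)\,p_i$ back inside the floor and ceiling yields $a_i(h) \in \{\lfloor h\,p_i/p \rfloor, \lceil h\,p_i/p \rceil\}$.

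Both directions reduce to manipulating floors and ceilings around additions of integers, so I do not expect a genuine obstacle. The only substantive step is noticing that house sizes divisible by $p$ are precisely the places where quota becomes rigid, and once that is pointed out, the blockwise decomposition and its verification are essentially bookkeeping.
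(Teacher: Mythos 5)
Your proposal is correct and follows essentially the same route as the paper's proof: both directions hinge on the observation that quota is rigid at house sizes that are multiples of $p$, and both reduce to pulling the integer $(k-1)\,p_i$ in and out of the floors and ceilings. The only cosmetic difference is your convention $0 \leq r < p$ versus the paper's $1 \leq r \leq p$ for the remainder, which changes nothing.
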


This lemma allows us to apply cumulative rounding to only $T=p$ many copies of a star graph.
Then, cumulative rounding produces a random matching that encodes a finite seat sequence satisfying quota, and \cref{lem:repetition2} shows that the infinite repetition of this finite sequence describes an infinite seat sequence satisfying quota.
This implies the existence of an apportionment method satisfying all three axioms we aimed for. The formal proof is in \cref{app:housemonoproofs}.

\begin{restatable}{theorem}{thmhousemono}
\label{thm:housemono}
There exists an apportionment method $F$ that satisfies house monotonicity, quota, and ex ante proportionality.
\end{restatable}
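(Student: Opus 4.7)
The plan is to reduce the theorem to a single application of cumulative rounding (\cref{thm:cumulative}) on a carefully chosen family of star graphs, and then use \cref{lem:repetition2} to lift the resulting length-$p$ seat sequence to an infinite one. Since none of the three axioms relate outputs across different population profiles, it suffices to construct, for each fixed $\vec{p}$, a distribution over house-monotone, quota-compliant apportionment maps $h \mapsto (a_i(h))_{i \in N}$ whose marginals are proportional, and then take independent randomness across profiles to assemble the global method $F$.

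Fix $\vec{p}$ and set $p \coloneqq \sum_{i \in N} p_i$. I build the bipartite graph $(A \cup B, E)$ with $A = \{a\}$, $B = \{b_i : i \in N\}$, and $E = \{\{a, b_i\} : i \in N\}$. For each time step $t = 1, \dots, p$, I put the \emph{same} weight $w^t_{\{a,b_i\}} \coloneqq p_i/p$ on every edge. Then $d_a^t = 1$ at every $t$, and $\sum_{t' \le t} d_{b_i}^{t'} = t\,p_i/p$ for every state $i$ and every $t \le p$. I invoke \cref{thm:cumulative} on this instance to obtain random variables $X_e^t$, and define the seat sequence $\beta_1,\dots,\beta_p$ by letting $\beta_t = i$ whenever $X_{\{a,b_i\}}^t = 1$. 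Degree preservation at $a$ forces exactly one edge per time step to be rounded up, so $\beta_t$ is well defined and lies in $N$. Cumulative degree preservation at each $b_i$ yields $\sum_{t' \le t} X_{\{a,b_i\}}^{t'} \in \{\lfloor t\,p_i/p \rfloor,\lceil t\,p_i/p \rceil\}$ for every $t \le p$, which is precisely the statement that the finite seat sequence $\beta$ satisfies quota in the sense of \cref{sec:cumulativeintro}.

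Next I apply \cref{lem:repetition2}: concatenating infinitely many copies of the same $\beta$ gives an infinite seat sequence $\alpha = \beta,\beta,\beta,\dots$ that satisfies quota. I interpret $\alpha$ as a solution by setting $F_i(\vec p, h) \coloneqq |\{h' \le h : \alpha_{h'} = i\}|$. House monotonicity is immediate from the seat-sequence construction, and quota holds by the previous paragraph combined with \cref{lem:repetition2}. For ex ante proportionality, I write $h = k p + r$ with $0 \le r < p$; then $F_i(\vec p, h) = k\,p_i + \sum_{t=1}^{r} \mathds{1}[\beta_t = i]$, so the marginal-distribution property of \cref{thm:cumulative} gives
\[
\mathbb{E}[F_i(\vec p, h)] \;=\; k\,p_i + \sum_{t=1}^{r} w^t_{\{a,b_i\}} \;=\; k\,p_i + r\,\tfrac{p_i}{p} \;=\; h\,\tfrac{p_i}{p} \;=\; q_i.
\]

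I do not expect a serious obstacle: the construction is essentially a direct assembly of the two results stated above, with the star graph chosen precisely so that $d_a^t = 1$ forces a single winner at every step and so that cumulative degree preservation at the $b_i$ coincides with the quota constraint. The only delicate point is bookkeeping for $h > p$, which is handled cleanly by \cref{lem:repetition2} together with the fact that standard quotas at multiples of $p$ are integer and hence force the ``reset'' apportionment $(k\,p_1,\dots,k\,p_n)$ on any quota-compliant solution. The pitfalls exhibited in \cref{sec:pitfalls} are avoided precisely because cumulative rounding globally couples all time steps at once, rather than attempting a seat-by-seat extension.
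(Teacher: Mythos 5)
Your proposal is correct and follows essentially the same route as the paper's own proof: apply cumulative rounding to a star graph with $T=p$ copies and uniform weights $p_i/p$, read off a quota-compliant finite seat sequence from degree preservation at $a$ and cumulative degree preservation at the $b_i$, and lift it to all house sizes via \cref{lem:repetition2}, with independent randomness across profiles. The only cosmetic difference is that the paper factors the star-graph rounding into a separate lemma (\cref{lem:perprob}) before assembling the method.
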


\paragraph{Implications for deterministic methods.}
Our construction also increases our understanding of \emph{deterministic} apportionment solutions satisfying house monotonicity and quota:
Indeed, the possible roundings of the bipartite graph constructed for cumulative rounding (plus some minor modifications, described in the proof) turn out to correspond one-to-one to the finite seat sequences satisfying quota.
Together with \cref{lem:repetition2}, this gives a characterization of all seat sequences that satisfy quota, providing a graph-theoretic alternative to the characterizations by \citet{Still79} and \citet{BY79}.

\begin{restatable}{theorem}{propquotatone}
\label{prop:quotatone}
For any population profile $\vec{p}$, we can construct a bipartite graph whose perfect matchings are in one-to-one correspondence with the finite seat sequences satisfying quota.
\end{restatable}
The proof is deferred to \cref{app:housemonoproofs}.
Combined with \cref{lem:repetition2} this characterizes the set of infinite seat sequences satisfying quota and thus the apportionment solutions satisfying house monotonicity and quota.

A qualitative difference to the previous characterizations \citep{Still79,BY79} of apportionment solutions satisfying house monotonicity and quota is that the matchings allow for a \emph{geometric} description as the corner points of the bipartite graph's matching polytope.
Since a fractional matching assigning each edge $\{a, b_i\}$ a weight of $p_i / p > 0$ lies in the interior of this polytope of perfect fractional matchings, one immediate consequence of this characterization (equivalently, of ex ante proportionality in \cref{thm:housemono}) is that, for each state $i$ and $h \in \natsone$, there is a house monotone and quota-compliant solution that assigns the $h$-th seat to $i$.
To our knowledge, this result is not obvious based on the earlier characterizations.
More generally, the polytope characterization might be useful in answering questions such as ``For a set of pairs $(h_1, i_1), (h_2, i_2), \dots, (h_t, i_t)$, is there a population-monotone and quota-compliant solution that assigns the $h_j$-th seat to state $i_j$ for all $1 \leq j \leq t$?''
To answer this question, one can remove the nodes $a^{h_j}$ and $b_{i_j}^{h_j}$ from the graph (simulating that they got matched) and check whether the remaining graph still permits a perfect matching, for example using Hall's marriage theorem~\citep{Hall35}.
Finally, our formulation allows to optimize linear objectives over the space of solutions satisfying quota and house monotonicity. For example, such an optimization formulation might highlight natural quota-compliant and house monotone solutions other than the one by \citet{BY75}.

\paragraph{Computation.}
Before we prove the cumulative rounding result in \cref{sec:cumulativeproof}, let us quickly discuss computational considerations of our house-monotone apportionment method.
While it is possible to run dependent rounding on the constructed graph (for a given population profile $\vec{p}$), the running time would scale in $\mathcal{O}(p^2 \, n^2)$, and the quadratic dependence on the total population $p$ might be prohibitive.
In practice, we see two ways to avoid this computational cost:

First, one might often not require a solution that is house monotone on all possible house sizes $h \in \natsone$; instead, it might suffice to rule out Alabama paradoxes for house sizes up to an upper bound $h_{\mathit{max}}$.
In this case, it suffices to apply cumulative rounding on $h_{\mathit{max}}$ many copies of the graph, leading to a much more manageable running time of $\mathcal{O}(h_{\mathit{max}}^2 \, n^2)$.

A second option would be to apply cumulative rounding on all $p$ copies of the graph, but to stop pipage rounding once all edge weights in the first $h$ copies of the graph are integral, even if edge weights for higher house sizes are still fractional.
This would allow to return an apportionment for inputs $\vec{p}, h$ while randomly determining not a single house-monotone solution, but a conditioned distribution $F^c$ over house-monotone solutions, all of which agree on the apportionment for $\vec{p}$ and $h$.
Since all solutions are house monotone, the expected number of seats for a party will always monotonically increase in $h$ across the conditioned distribution.
Should it become necessary to determine apportionments for larger house sizes, one can simply continue the cumulative-rounding process where it left off.
Since the pipage rounding used to prove \cref{thm:cumulative} leaves open which cycles or maximal paths get rounded next, it seems likely that one can deliberately choose cycles/paths such that the apportionment for the first $h$ seats is determined in few rounds. %

\section{Proof of Cumulative Rounding}
\label{sec:cumulativeproof}
We will now prove \cref{thm:cumulative} on cumulative rounding.
Our proof constructs a weighted bipartite graph including $T$ many copies of $(A \cup B, E)$, connected by appropriate additional edges and nodes, and then applying dependent rounding to this constructed graph.
The additional edges and vertices ensure that if too many edges adjacent to some node $v$ are rounded up in one copy of the graph, then this is counterbalanced by rounding down edges adjacent to $v$ in another copy.
\begin{construction}
\label{constr}
Let $(A \cup B, E)$, $T$, and $\{w_e^t\}_{e,t}$ be given as in \cref{thm:cumulative}.
We construct a new weighted, undirected, and bipartite graph as follows:
For each node $v \in A \cup B$ and for each $t = 1, \dots, T$, create four nodes $v^t$, $\onebar{v}^t$, $\twobar{v}^t$, and $v^{t:t+1}$; furthermore, create a node $v^{0:1}$ for each node $v$.
For each $\{a, b\} \in E$ and $t=1, \dots, T$, connect the nodes $a^t$ and $b^t$ with an edge of weight $w_{\{a, b\}}^t$.
Additionally, for each node $v \in A \cup B$ and each $t = 1, \dots, T$, insert edges with the following weights:
\begin{center}
\begin{tikzpicture}[node distance=.5cm and 4.0cm]
 \tikzstyle{vertex}=[]%
\node[vertex](top) {$v^t$};
\node[vertex,below=of top] (middle) {$\twobar{v}\vphantom{v}^t$};
\node[vertex,below=of middle] (bottom) {$\onebar{v}\vphantom{v}^t$};
\node[vertex,left=of bottom] (left) {$v^{t-1:t}$};
\node[vertex,right=of bottom] (right) {$v^{t:t+1}$};
\draw (left) -- node[below] {\footnotesize $\sum_{t'=1}^{t-1} d_v^{t'} - \left\lfloor \sum_{t'=1}^{t-1} d_v^{t'} \right\rfloor$} (bottom) -- node[below] {\footnotesize $1 - \sum_{t'=1}^{t} d_v^{t'} + \left\lfloor \sum_{t'=1}^{t} d_v^{t'} \right\rfloor$} (right);
\draw (top) -- node[right] {\footnotesize $1 - d_v^t + \left\lfloor d_v^t \right\rfloor$} (middle) -- node[right] {\footnotesize $d_v^t - \left\lfloor d_v^t \right\rfloor$} (bottom);
\end{tikzpicture}
\end{center}
\end{construction}

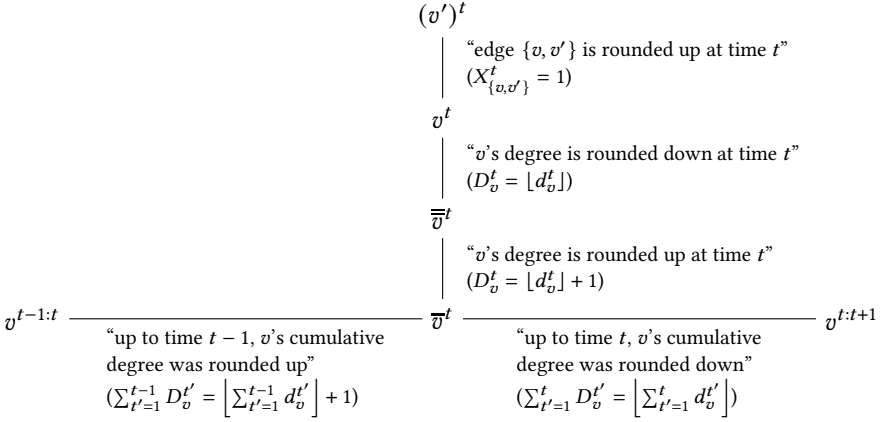
\begin{figure}
    \centering
\begin{tikzpicture}[node distance=.8cm and 4.7cm]
 \tikzstyle{vertex}=[]
\node[vertex](top) {$v^t$};
\node[vertex,above=of top] (toperer) {$(v')^t$};
\node[vertex,below=of top] (middle) {$\twobar{v}\vphantom{v}^t$};
\node[vertex,below=of middle] (bottom) {$\onebar{v}\vphantom{v}^t$};
\node[vertex,left=of bottom] (left) {$v^{t-1:t}$};
\node[vertex,right=of bottom] (right) {$v^{t:t+1}$};
\draw (left) -- node[below,align=left,font=\footnotesize] {``up to time $t-1$, $v$'s cumulative \\ degree was rounded up''\\($\textstyle{\sum_{t'=1}^{t-1} D_v^{t'} = \left\lfloor \sum_{t'=1}^{t-1} d_v^{t'} \right\rfloor + 1}$)} (bottom) -- node[below,align=left,font=\footnotesize] {``up to time $t$, $v$'s cumulative \\ degree was rounded down''\\($\textstyle{\sum_{t'=1}^{t} D_v^{t'} = \left\lfloor \sum_{t'=1}^{t} d_v^{t'} \right\rfloor}$)} (right);
\draw (toperer) -- node[right,align=left,xshift=2mm,font=\footnotesize] {``edge $\{v,v'\}$ is rounded up at time $t$''\\($X_{\{v, v'\}}^t = 1$)} (top) -- node[right,align=left,xshift=2mm,font=\footnotesize] {``$v$'s degree is rounded down at time $t$''\\($D_v^t = \lfloor d_v^t \rfloor$)} (middle) -- node[right,align=left,xshift=2mm,font=\footnotesize] {``$v$'s degree is rounded up at time $t$''\\($D_v^t = \lfloor d_v^t \rfloor + 1$)} (bottom);
\end{tikzpicture}
    \caption{Interpretation of each edge being rounded up in the constructed graph, for arbitrary nodes $v, v' \in A \cup B$ and $1 \leq t \leq T$. The correctness of this characterization will be shown along the proof of \cref{thm:cumulative}, specifically in the sections on degree preservation and cumulative degree preservation.}
    \label{fig:roundinginterpretation}
\end{figure}
Before we go into the proof, we give in \cref{fig:roundinginterpretation} an interpretation for what it means for each edge in the constructed graph to be rounded up.
One can easily verify that, under the (premature) assumption that cumulative rounding satisfies marginal distribution, degree preservation, and cumulative degree preservation, the edge weights coincide with the probabilities of each interpretation's event.
We want to stress that it is not obvious that these descriptions will indeed be consistent for any dependent rounding of the constructed graph, and we will not make use of these descriptions in the proof of \cref{thm:cumulative}.
Instead, the characterizations will follow from intermediate results in the proof.
We give these interpretations here to make the construction seem less mysterious.
We begin the formal analysis of the construction with a sequence of simple observations about the constructed graph (proofs are in \cref{app:cumulativeproofs}).

\begin{restatable}{lemma}{lembipartite}
\label{lem:bipartite}
The graph of \cref{constr} is bipartite.
\end{restatable}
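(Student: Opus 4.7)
The plan is to exhibit an explicit $2$-coloring of the vertices of the constructed graph that certifies bipartiteness, using the fact that $(A, B)$ is already a bipartition of the original graph. Concretely, I will assign each vertex $v \in A \cup B$ a base parity $\pi(v) \in \{0,1\}$ with $\pi(v) = 0$ if $v \in A$ and $\pi(v) = 1$ if $v \in B$, and then propagate this to the constructed graph by setting the color of $v^t$ and $\onebar{v}^t$ to $\pi(v)$, and the color of $\twobar{v}^t$, $v^{t-1:t}$, and $v^{t:t+1}$ to $1 - \pi(v)$ (in particular, this covers $v^{0:1}$ and $v^{T:T+1}$).

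The verification then consists of checking the five edge types introduced in \cref{constr} one by one. For an edge $\{a^t, b^t\}$ arising from $\{a,b\} \in E$ with $a \in A$, $b \in B$, the endpoints have colors $\pi(a) = 0$ and $\pi(b) = 1$, so this edge is bichromatic because $(A,B)$ was a bipartition to begin with. For each of the four remaining edge types, which are local to a single $v \in A \cup B$ and a single $t$, one endpoint carries color $\pi(v)$ and the other carries $1 - \pi(v)$ by construction: the edge $\{v^t, \twobar{v}^t\}$ has colors $\pi(v)$ and $1-\pi(v)$; the edge $\{\twobar{v}^t, \onebar{v}^t\}$ has colors $1-\pi(v)$ and $\pi(v)$; and symmetrically for $\{v^{t-1:t}, \onebar{v}^t\}$ and $\{\onebar{v}^t, v^{t:t+1}\}$. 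Hence every edge is bichromatic, which proves that the constructed graph is bipartite. I do not anticipate a hard step here—the statement is essentially bookkeeping, and the only thing to be careful about is to check that the ``boundary'' vertices $v^{0:1}$ and $v^{T:T+1}$ (which only appear on one side of an edge of type $4$ or $5$) are correctly accounted for by the rule above.
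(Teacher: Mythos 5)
Your proof is correct and is essentially the paper's own argument: your color-$0$ class $\{a^t, \onebar{a}^t \mid a \in A\} \cup \{\twobar{b}^t, b^{t:t+1} \mid b \in B\}$ is exactly the independent set the paper exhibits as one side of the bipartition. The edge-by-edge verification matches; no issues.
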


\begin{restatable}{lemma}{lemweightunit}
\label{lem:weightunit}
All edge weights lie between $0$ and $1$.
\end{restatable}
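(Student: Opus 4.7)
The plan is to work through the edge types introduced by \cref{constr} one by one and verify the $[0,1]$ bound for each; the argument is essentially a bookkeeping exercise, since almost every weight is either a fractional part or one minus a fractional part.

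First I would dispatch the ``copy edges'' $\{a^t, b^t\}$ for $\{a,b\} \in E$ and $1 \leq t \leq T$: their weight is $w_{\{a,b\}}^t$, which lies in $[0,1]$ by assumption of \cref{thm:cumulative}. Next I would handle the two ``vertical'' edges attached to each node $v^t$. The edge $\{v^t, \twobar{v}^t\}$ carries weight $1 - d_v^t + \lfloor d_v^t \rfloor = 1 - (d_v^t - \lfloor d_v^t \rfloor)$, and the edge $\{\twobar{v}^t, \onebar{v}^t\}$ carries weight $d_v^t - \lfloor d_v^t \rfloor$. Since $d_v^t - \lfloor d_v^t \rfloor \in [0,1)$ by definition of the floor function, the first weight lies in $(0,1]$ and the second in $[0,1)$, both comfortably inside $[0,1]$.

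Then I would treat the two ``horizontal'' edges at $\onebar{v}^t$ analogously. The edge $\{v^{t-1:t}, \onebar{v}^t\}$ has weight $\sum_{t'=1}^{t-1} d_v^{t'} - \lfloor \sum_{t'=1}^{t-1} d_v^{t'} \rfloor$, which is again a fractional part in $[0,1)$, and the edge $\{\onebar{v}^t, v^{t:t+1}\}$ has weight $1 - \sum_{t'=1}^{t} d_v^{t'} + \lfloor \sum_{t'=1}^{t} d_v^{t'} \rfloor$, which is one minus a fractional part and therefore in $(0,1]$. Since these are the only edge types produced by \cref{constr}, the proof is complete.

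There is no real obstacle here; the statement is immediate once one notes that each nontrivial weight is of the form $\{x\}$ or $1 - \{x\}$ for some nonnegative real $x$ (with the convention $\sum_{t'=1}^{0} d_v^{t'} = 0$ handling the $t=1$ boundary case). I would therefore keep the write-up short, essentially a single enumeration of the four cases.
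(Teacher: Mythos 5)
Your proof is correct and follows essentially the same route as the paper's: dispatch the copy edges by the assumption $w_e^t \in [0,1]$, and observe that every remaining weight is either a fractional part $x - \lfloor x\rfloor$ or its complement $1 - (x - \lfloor x\rfloor)$, hence in $[0,1]$. The paper merely states this in one line without enumerating the four edge types separately, so your write-up could be compressed accordingly.
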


\begin{restatable}{lemma}{lemfractionaldegrees}
\label{lem:fractionaldegrees}
For each node $v \in A \cup B$, the following table gives the fractional degrees of various nodes in the constructed graph, all of which are integer:
\begin{center}
\normalfont
\begin{tabular}{ll}
\toprule
nodes & fractional degree \\
\midrule
$v^t$ $(\forall 1 \! \leq \! t \! \leq \! T)$ & $\lfloor d_v^t \rfloor + 1$ \\
$\onebar{v}^t$ $(\forall 1 \! \leq \! t \! \leq \! T)$ & $\left\lfloor \sum_{t'=1}^t d_v^{t'} \right\rfloor - \left\lfloor \sum_{t'=1}^{t-1} d_v^{t'} \right\rfloor - \lfloor d_v^t \rfloor + 1$ \\ %
$\twobar{v}^t$ $(\forall 1 \! \leq \! t \! \leq \! T)$ & 1 \\
$v^{t:t+1}$ $(\forall 1 \! \leq \! t \! \leq \! T\!-\!1)$ & 1 \\
$v^{0:1}$ & 0 \\
\bottomrule
\end{tabular}
\end{center}
\end{restatable}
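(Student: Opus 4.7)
The plan is to verify each of the five fractional degrees by a direct calculation: enumerate the edges incident to a node of the given type in the graph produced by Construction~\ref{constr}, substitute the corresponding weights, and simplify. Integrality of each fractional degree will then be immediate, since each simplified expression is a sum of floors of reals plus integers.

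First I would handle $v^t$. The edges incident to $v^t$ are the bipartite-graph edges $\{v^t, u^t\}$ of weight $w_{\{v,u\}}^t$ for every $u$ with $\{v,u\} \in E$, together with the single edge $\{v^t, \twobar{v}^t\}$ of weight $1 - d_v^t + \lfloor d_v^t \rfloor$. By the definition of $d_v^t$, the first contribution sums to exactly $d_v^t$, so the total fractional degree is $d_v^t + (1 - d_v^t + \lfloor d_v^t \rfloor) = \lfloor d_v^t \rfloor + 1$. The node $\twobar{v}^t$ is incident only to $\{v^t, \twobar{v}^t\}$ and $\{\twobar{v}^t, \onebar{v}^t\}$, whose weights $1 - d_v^t + \lfloor d_v^t \rfloor$ and $d_v^t - \lfloor d_v^t \rfloor$ visibly sum to $1$.

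Next I would handle the ``bridge'' nodes. For $1 \le t \le T-1$, the node $v^{t:t+1}$ carries the edge $\{\onebar{v}^t, v^{t:t+1}\}$ introduced at time $t$ (weight $1 - \sum_{t'=1}^t d_v^{t'} + \lfloor \sum_{t'=1}^t d_v^{t'} \rfloor$) and the edge $\{v^{(t+1)-1:(t+1)}, \onebar{v}^{t+1}\} = \{v^{t:t+1}, \onebar{v}^{t+1}\}$ introduced at time $t+1$ (weight $\sum_{t'=1}^t d_v^{t'} - \lfloor \sum_{t'=1}^t d_v^{t'} \rfloor$). These are the only edges incident to $v^{t:t+1}$, and again the weights sum to $1$. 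For the boundary node $v^{0:1}$, the only incident edge is $\{v^{0:1}, \onebar{v}^1\}$ from time $t=1$, whose weight is $\sum_{t'=1}^{0} d_v^{t'} - \lfloor \sum_{t'=1}^{0} d_v^{t'} \rfloor = 0$, giving a fractional degree of $0$ as claimed.

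The most involved case, though still routine, is $\onebar{v}^t$, which has exactly three incident edges: $\{v^{t-1:t}, \onebar{v}^t\}$, $\{\twobar{v}^t, \onebar{v}^t\}$, and $\{\onebar{v}^t, v^{t:t+1}\}$. Adding their weights, the non-floor real terms
\[ \Big(\sum_{t'=1}^{t-1} d_v^{t'}\Big) + (d_v^t) - \Big(\sum_{t'=1}^{t} d_v^{t'}\Big) \]
telescope to $0$, leaving exactly $1 + \lfloor \sum_{t'=1}^t d_v^{t'} \rfloor - \lfloor \sum_{t'=1}^{t-1} d_v^{t'} \rfloor - \lfloor d_v^t \rfloor$, matching the claim. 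Since each of the five expressions is visibly a sum of integers and floors of real numbers, all fractional degrees are integers. I do not anticipate any real obstacle beyond careful bookkeeping: in particular, making sure that at each intermediate node I have found all incident edges (including the ``backward'' edge from the construction at the next time step for $v^{t:t+1}$), and that the weights are combined with the correct signs.
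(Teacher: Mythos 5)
Your proposal is correct and matches the paper's own proof: both simply enumerate the incident edges of each node type from \cref{constr}, sum the weights, and observe the telescoping cancellations. The bookkeeping details you flag (the ``backward'' edge incident to $v^{t:t+1}$ from the time-$(t+1)$ gadget, and the degenerate weight $0$ at $v^{0:1}$) are exactly the points the paper's calculation also relies on.
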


\begin{proof}[Proof of \cref{thm:cumulative}]
We define cumulative rounding as the random process that follows \cref{constr} and then applies dependent rounding (\cref{thm:dependent}) to the constructed graph, which is valid since the graph is bipartite and all edge weights lie in $[0,1]$ (\cref{lem:bipartite,lem:weightunit}).
For an edge $e$ in the constructed graph, let $\hat{X}_e$ be the random variable indicating whether dependent rounding rounds it up or down.
For any edge $\{a, b\} \in E$ in the underlying graph and some $1 \leq t \leq T$, we define the random variable $X_{\{a,b\}}^t$ to be equal to $\hat{X}_{\{a^t, b^t\}}$.
Recall that we defined $D_v^t = \sum_{v \in e \in E} X_e^t$.

To prove the theorem, we have to bound the running time of this process, and provide the four guaranteed properties: marginal distribution, degree preservation, negative correlation, and cumulative degree preservation. The last property takes by far the most work.

\paragraph{Running time.}
Without loss of generality, we may assume that each vertex $v \in A \cup B$ is incident to at least one edge, since, otherwise, we could remove this vertex in a preprocessing step.
From this, it follows that $|E| \in \Omega(|A| + |B|)$.
Constructing the graph takes $\mathcal{O}(T \, |E|)$ time, which will be dominated by the time required for running dependent rounding on the constructed graph.
The constructed graph has $(1 + 4 \, T) \, (|A| + |B|) \in \mathcal{O}(T \, (|A| + |B|))$ nodes and $T \, |E| + 4 \, T \, (|A| + |B|) \in \mathcal{O}(T \, |E|)$ edges.
Since the running time of dependent rounding scales in the product of the number of vertices and the number of edges, our procedure runs in $\mathcal{O}(T^2 \, (|A| + |B|) \, |E|)$ time, as claimed.

\paragraph{Marginal distribution.}
For an edge $\{a, b\} \in E$ and $1 \leq t \leq T$, $\mathbb{E}[X_{\{a,b\}}^t] = \mathbb{E}[\hat{X}_{\{a^t,b^t\}}] = w_{\{a, b\}}^t$, where the last equality follows from the marginal-distribution property of dependent rounding.

\paragraph{Degree preservation.}
Fix a node $v \in A \cup B$ and $1 \leq t \leq T$.
By \cref{lem:fractionaldegrees}, the fractional degree of $v^t$ is $\lfloor d_v^t \rfloor + 1$, and thus, by degree preservation of dependent rounding, exactly $\lfloor d_v^t \rfloor + 1$ edges adjacent to $v^t$ must be rounded up.
The only of these edges that does not count into $D_v^t$ is $\{\twobar{v}^t, v^t\}$; depending on whether this edge is rounded up or down, $D_v^t$ is either $\lfloor d_v^t \rfloor$ or $\lfloor d_v^t \rfloor + 1$.
If $d_v^t$ is not integer, the latter number equals $\lceil d_v^t \rceil$, which proves degree preservation.
Else, if $d_v^t$ is an integer, the edge weight of $\{\twobar{v}^t, v^t\}$ is 1.
Dependent rounding always rounds up edges with weight 1, which means that $D_v^t$ is definitely $\lfloor d_v^t \rfloor$ in this case.
Thus, degree preservation holds in either case.

\paragraph{Negative correlation.}
Negative correlation for $v \in A \cup B$, $S \subseteq \{e \in E \mid v \in e\}$, and $1 \leq t \leq T$ directly follows from the negative-correlation property of dependent rounding for the node $v^t$ and the edge set $S' \coloneqq \big\{\{v^t, (v')^t\} \,\big|\, \{v, v'\} \in S\big\}$.

\paragraph{Cumulative degree preservation.}
Fix a node $v \in A \cup B$ and $1 \leq t \leq T$.
We will consider the ``rounded version'' of the constructed graph, i.e., the unweighted bipartite graph over the nodes of the constructed graph in which exactly those edges are present that got rounded up by the randomized rounding process.
We define five sets of nodes in the rounded graph (\cref{fig:cumulativeproof}):
\begin{equation*}
\begin{gathered}
    V \coloneqq \{v^{t'} \mid 1 \leq t' \leq t\} \qquad\qquad
    V' \coloneqq \{(v')^{t'} \mid v' \in (A \cup B) \setminus \{v\}, 1 \leq t' \leq t\} \\
    \onebar{V} \coloneqq \{\onebar{v}^{t'} \mid 1 \leq t' \leq t\} \qquad\qquad
    \twobar{V} \coloneqq \{\twobar{v}^{t'} \mid 1 \leq t' \leq t\} \qquad\qquad
    V^{\bm{:}} \coloneqq \{v^{t:t+1} \mid 0 \leq t' \leq t\}
\end{gathered}
\end{equation*}
For any set of nodes $V_1$ in the rounded graph, we denote its neighborhood by $N(V_1)$, and we will write $\degr(V_1)$ for the sum of degrees of $V_1$ in the rounded graph.
For any two sets of nodes $V_1, V_2$, we write $\cut(V_1, V_2)$ to denote the number of edges between $V_1$ and $V_2$ in the rounded graph.

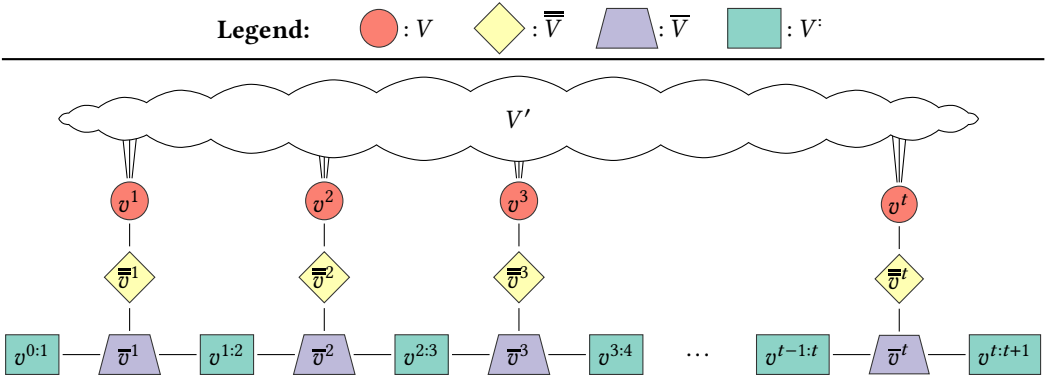
\begin{figure}
    \centering
\begin{tikzpicture}[node distance=.3cm and .5cm]
 \tikzstyle{vertex}=[outer sep=.5mm,draw=black!80]
 \tikzstyle{trans}=[fill=brewer1]
 \tikzstyle{ob}=[trapezium,trapezium angle=75,fill=brewer3]
 \tikzstyle{tb}=[diamond,inner sep=.3mm,fill=brewer2]
 \tikzstyle{norma}=[circle,inner sep=.3mm,fill=brewer4]
 \tikzstyle{legend}=[]

\node[vertex,trans] (v01) {$v^{0:1}$};
\node[vertex,ob,right=of v01] (vb1) {$\onebar{v}^1$};
\node[vertex,trans,right=of vb1] (v12) {$v^{1:2}$};
\node[vertex,ob,right=of v12] (vb2) {$\onebar{v}^2$};
\node[vertex,trans,right=of vb2] (v23) {$v^{2:3}$};
\node[vertex,ob,right=of v23] (vb3) {$\onebar{v}^3$};
\node[vertex,trans,right=of vb3] (v34) {$v^{3:4}$};
\node[right=of v34,xshift=-1mm] (ell) {\dots};
\node[vertex,trans,right=of ell,xshift=-1mm] (vt1t) {$v^{t-1:t}$};
\node[vertex,ob,right=of vt1t] (vbt) {$\onebar{v}^t$};
\node[vertex,trans,right=of vbt] (vtt1) {$v^{t:t+1}$};
\draw (v01)-- (vb1) -- (v12) -- (vb2) -- (v23) -- (vb3) -- (v34);
\draw (vt1t) -- (vbt) -- (vtt1);

\node[vertex,tb,above=of vb1] (vbb1) {$\twobar{v}^1$};
\node[vertex,norma,above=of vbb1] (v1) {$v^1$};
\draw (vb1) -- (vbb1) -- (v1);

\node[vertex,tb,above=of vb2] (vbb2) {$\twobar{v}^2$};
\node[vertex,norma,above=of vbb2] (v2) {$v^2$};
\draw (vb2) -- (vbb2) -- (v2);

\node[vertex,tb,above=of vb3] (vbb3) {$\twobar{v}^3$};
\node[vertex,norma,above=of vbb3] (v3) {$v^3$};
\draw (vb3) -- (vbb3) -- (v3);

\node[vertex,tb,above=of vbt] (vbbt) {$\twobar{v}^t$};
\node[vertex,norma,above=of vbbt] (vt) {$v^t$};
\draw (vbt) -- (vbbt) -- (vt);

\node[yshift=3.1cm] (cl) at ($(v01)!0.5!(vtt1)$) {};

\draw (v1 |- cl) node (tmp) {} -- (v1);
\draw (v1) [xshift=-9mm] -- (tmp);
\draw (v1) [xshift=9mm] -- (tmp);

\draw (v2 |- cl) node (tmp) {} -- (v2);
\draw (v2) [xshift=-9mm] -- (tmp);
\draw (v2) [xshift=9mm] -- (tmp);

\draw (v3 |- cl) node (tmp) {} -- (v3);
\draw (v3) [xshift=-9mm] -- (tmp);
\draw (v3) [xshift=9mm] -- (tmp);

\draw (vt |- cl) node (tmp) {} -- (vt);
\draw (vt) [xshift=-9mm] -- (tmp);
\draw (vt) [xshift=9mm] -- (tmp);

\draw node[cloud, draw,cloud puffs=30,cloud puff arc=70, aspect=17, inner ysep=1.3mm,fill=white] at (cl) (cloud) {$V'$};
\node[yshift=1.15cm,xshift=1.78cm] (leg) at (v1 |- cl) {\textbf{Legend:}};

\node[legend,vertex,norma,right=of leg,yshift=.4mm] (normaleg) {$\phantom{v^1}$};
\node[anchor=west,xshift=-1mm] at (normaleg.east) {: $V$};
\node[legend,vertex,tb,right=of normaleg,xshift=4mm] (tbleg) {$\phantom{\twobar{v}^1}$};
\node[anchor=west,xshift=-1mm] at (tbleg.east) {: $\vphantom{V}\smash{\twobar{V}}$};
\node[legend,vertex,ob,right=of tbleg,xshift=4mm] (obleg) {$\phantom{\onebar{v}^1}$};
\node[anchor=west,xshift=-1mm] at (obleg.east) {: $\vphantom{V}\smash{\onebar{V}}$};
\node[legend,vertex,trans,right=of obleg,xshift=4mm] (transleg) {$\phantom{v^{1:2}}$};
\node[anchor=west,xshift=-1mm] at (transleg.east) (translegtext) {: $\vphantom{V}\smash{V^{\bm{:}}}$};
\draw (v01.west |- tbleg.south) [line width=.8pt]-- (vtt1.east |- tbleg.south);

\end{tikzpicture}
    \caption{Illustration of the counting argument for proving cumulative degree preservation. Edges in the figure are edges from the constructed graph, a superset of the edges in the rounded graph. Node color and shape indicate the set that a node belongs to, as indicated in the legend.}
    \label{fig:cumulativeproof}
\end{figure}
Note that $\sum_{t'=1}^t D_v^{t'}$, which we must bound, equals $\cut(V, V')$.
We will bound this quantity by repeatedly using the following fact, which we refer to \emph{pivoting}:
For pairwise disjoint sets of nodes $V_0, V_1, V_2$, if $N(V_0) \subseteq V_1 \cup V_2$, then $\degr(V_0) = \cut(V_0, V_1) + \cut(V_0, V_2)$.
Since \cref{lem:fractionaldegrees} gives us a clear view of the fractional degrees of nodes in the constructed graph, and since, by degree preservation, a node's degree in the rounded graph must equal the fractional degree whenever the latter is an integer, this property allows us to express cuts in terms of other cuts.
\Cref{fig:cumulativeproof} illustrates which of these sets border on each other, and helps in following along with the derivation.
\begin{align*}
    \textstyle{\sum_{t'=1}^{t} D_v^{t'}} &= \cut(V, V') \\
    &= \degr(V) - \cut(V, \twobar{V}) \tag*{(pivot $V_0 = V, V_1 = V', V_2 = \twobar{V}$)} \\
    &= t + \textstyle{\sum_{t'=1}^t \lfloor d_v^{t'} \rfloor} - \cut(V, \twobar{V}) \tag*{($\degr(V) = t + \textstyle{\sum_{t'=1}^t \lfloor d_v^{t'} \rfloor}$ by \cref{lem:fractionaldegrees})} \\
    &= t + \textstyle{\sum_{t'=1}^t \lfloor d_v^{t'} \rfloor} - \degr(\twobar{V}) + \cut(\twobar{V}, \onebar{V}) \tag*{(pivot $V_0 = \twobar{V}, V_1 = V, V_2 = \onebar{V}$)} \\
    &= \textstyle{\sum_{t'=1}^t \lfloor d_v^{t'} \rfloor} + \cut(\twobar{V}, \onebar{V}) \tag*{($\degr(\twobar{V}) = t$ by \cref{lem:fractionaldegrees})} \\
    &= \textstyle{\sum_{t'=1}^t \lfloor d_v^{t'} \rfloor} +
    \degr(\onebar{V}) - \cut(\onebar{V}, V^{\bm{:}}) \tag*{(pivot $V_0 = \onebar{V}, V_1 = \twobar{V}, V_2 = V^{\bm{:}}$)} \\
    &= \textstyle{\sum_{t'=1}^t \lfloor d_v^{t'} \rfloor} -
    \cut(\onebar{V}, V^{\bm{:}}) + \displaystyle{\sum\nolimits_{t'=1}^t}\textstyle{ \left(\lfloor \sum_{t''=1}^{t'} d_v^{t''}\rfloor - \lfloor\sum_{t''=1}^{t' - 1} d_v^{t''} \rfloor - \lfloor d_v^{t'} \rfloor + 1\right)}  \tag*{(\cref{lem:fractionaldegrees})} \\
    &= \textstyle{\sum_{t'=1}^t \lfloor d_v^{t'} \rfloor} - \cut(\onebar{V}, V^{\bm{:}}) + \lfloor \sum_{t''=1}^t d_v^{t''}\rfloor - \sum_{t'=1}^t \lfloor d_v^{t'} \rfloor + t  \tag*{(telescoping sum)} \\
    &= \textstyle{\lfloor \sum_{t'=1}^t d_v^{t'}\rfloor} + t - \cut(\onebar{V}, V^{\bm{:}}).
\intertext{To bound $\cut(\onebar{V}, V^{\bm{:}})$ in the last expression, observe that $N(V^{\bm{:}} \setminus \{v^{t:t+1}\}) \subseteq \onebar{V}$, from which it follows that $\cut(\onebar{V}, V^{\bm{:}} \setminus \{v^{t:t+1}\})=\degr(V^{\bm{:}} \setminus \{v^{t:t+1}\}) = t-1$.
Thus, $\cut(\onebar{V}, V^{\bm{:}}) = t-1 + \bone\{\hat{X}_{\{\onebar{v}^t, v^{t:t+1}\}}\}$, and we resume the above equality}
 \textstyle{\sum_{t'=1}^{t} D_v^{t'}} &= \textstyle{\lfloor \sum_{t'=1}^t d_v^{t'}\rfloor} + t - (t - 1 + \bone\{\hat{X}_{\{\onebar{v}^t, v^{t:t+1}\}}\}) = \textstyle{\lfloor \sum_{t'=1}^t d_v^{t'}\rfloor} + 1 - \bone\{\hat{X}_{\{\onebar{v}^t, v^{t:t+1}\}}\}.
\end{align*}
If $\sum_{t'=1}^t d_v^{t'}$ is not an integer, the above shows that $\sum_{t'=1}^{t} D_v^{t'}$ is either the floor or ceiling of $\sum_{t'=1}^t d_v^{t'}$, establishing cumulative degree preservation.
Else, if $\sum_{t'=1}^t d_v^{t'}$ is integer, note that the weight of the edge $\{\onebar{v}^t, v^{t:t+1}\}$ in the constructed graph is 1.
Since dependent rounding always rounds such edges up, $\textstyle{\sum_{t'=1}^{t} D_v^{t'}} = \lfloor \sum_{t'=1}^t d_v^{t'}\rfloor$.
This establishes cumulative degree preservation, the last of the properties guaranteed by the theorem.
\end{proof}

\section{Other Applications of Cumulative Rounding}
\label{sec:applications}

Our exploration of house monotone randomized apportionment led us to the more general technique of cumulative rounding, which we believe to be of broader interest. We next illustrate this by discussing additional applications. 

\subsection{Sortition of the European Commission}
\label{sec:europeancommission}
The European Commission is one of the main institutions of the European Union, in which it plays a role comparable to that of a government.
The commission consists of one commissioner from each of the 27 member states, and each commissioner is charged with a specific area of responsibility.
Since the number of EU member states has nearly doubled in the past 20 years, so has the size of the commission.
Besides making coordination inside the commission less efficient, the enlargement of the commission has led to the creation of areas of responsibility much less important than others.
Since the important portfolios are typically reserved for the largest member states, smaller states have found themselves with limited influence on central topics being decided in the commission.

To remedy this imbalance, \citet{BH09} propose to reduce the number of commissioners to 15, meaning that only a subset of the 27 member states would send a commissioner at any given time.
Which states would receive a seat would be determined every 5 years by a weighted lottery (``sortition''), in which states would be chosen with degressive proportional weights. Degressive means that smaller states get non-proportionately high weight; such weights are already used for apportioning the European parliament.
The authors argue that by the law of large numbers, political representation on the commission would be essentially proportional to these weights in politically relevant time spans.

However, a follow-up simulation study by \citet{BHJ13} challenges this assertion on two fronts:
(1)~First, the authors find that their implementation of a weighted lottery chooses states with probabilities that deviate from proportionality to the weights in a way that is not analytically tractable~(see \citealt[p.\ 24]{BH83}).
(2)~Second, and more gravely, their simulations undermine ``a central argument in favor of legitimacy'' in the original proposal, namely, that ``in the long term, the seats on the commission would be distributed approximately like the share of lots''~\citep[own translation]{BHJ13}.
From a mathematical point of view, the authors had overestimated the rate of concentration across the independent lotteries.
Instead, in the simulation,
it takes 30 lotteries (150 years) until there is a probability of 99\% that
all member states have sent at least one commissioner. %

These serious concerns could be resolved by using cumulative rounding to implement the weighted lotteries.
Specifically, we would again construct a star graph with a special node $a$ and one node $b_i$ for each state $i$, setting $T$ to the desired number of consecutive lotteries.
For each $1 \leq t \leq T$, each edge $\{a, b_i\}$ would be weighted by $\frac{15 \, w_i}{\sum_{j \in N} w_j}$ where $w_j$ is state $j$'s degressive weight. %
Degree preservation on $a$ would ensure that in each lottery $t$, exactly 15 distinct states are selected.
By marginal distribution, the selection probabilities would be exactly proportional to the degressive weights, resolving issue~(1).
Furthermore, cumulative degree preservation on the state nodes would eliminate issue~(2).
If we take the effective selection probabilities of \citet{BHJ13} as the states' weights, even the smallest states $i$ would have an edge weight $w_{\{a,b_i\}}^t \approx 0.187$.
Then, cumulative quota prevents any state from getting rounded down in $11 = \lceil 2 / 0.187 \rceil$ consecutive lotteries:
Indeed, fixing any $0 \leq t_0 \leq T - 11$,
\[ \textstyle{\sum_{t'=1}^{t_0 + 11} D_{b_i}^{t'} \geq \lfloor (t_0 + 11) \, 0.187 \rfloor \geq \lfloor t_0 \, 0.187 \rfloor + 2 \geq \lceil t_0 \, 0.187 \rceil + 1 \geq \sum_{t'=1}^{t_0} D_{b_i}^{t'}} + 1, \]
which means that state $i$ must have been selected at least once between time $t_0 + 1$ and $t_0 + 11$.
In political terms, this means that 55, not 150, years would be enough to \emph{deterministically} ensure that each member state send a commissioner at least once in this period.

This cumulative rounding approach can accommodate weights that change across lotteries according to population projections (which \citeauthor{BHJ13} do for some of their experiments), simply by choosing different weights across the copies of the star graph.
It is necessary, however, that these population changes are known in advance, since \emph{no} algorithm can guarantee cumulative degree preservation in an iterated apportionment setting in which population changes are observed online (i.e., just in time for the next apportionment to be made).
This is shown by the following example:
Let there be 4 states, and allocate a single seat per time step. At time $t=1$, all four states have an equal population, thus an edge weight of $1/4$. Without loss of generality, the first seat goes to state 4. At time $t=2$, state 4 disappears while states 1 through 3 have equal population, thus each an edge weight of $1/3$.
Without loss of generality, the second seat goes to state 3.
At time $t=3$, states 3 and 4 have zero population whereas states 1 and 2 have equal population (i.e., an edge weight of $1/2$).
Since states 1 and 2 have a cumulative quota of $1/4+1/3+1/2>1$, cumulative degree preservation requires both states to have at least one among the first three seats, but this is clearly impossible. 
We have presented this argument for an adaptive adversary; for a non-adaptive adversary, essentially the same argument shows that any online apportionment mechanism will violate cumulative degree preservation with probability at least $1/12$.
Note that this impossibility holds even in the absence of marginal distribution and negative correlation.
It is an intriguing question how one should design an online apportionment method that keeps violations from cumulative degree preservation at a minimum, and ensures that the cost or benefit of such deviations is fairly spread across the remaining states.

\subsection{Repeated Allocation of Courses to Faculty Or Shifts to Workers}
\label{sec:faculty}
A common paradigm in fair division is to first create a fractional assignment between agents and resources, and to then implement this fractional assignment in expectation, through randomized rounding.
Below, we describe a setting of allocating courses to faculty members in a university department, in which implementing a fractional assignment using cumulative rounding is attractive.

For a university department, denote its set of faculty members by $A$ and the set of possible courses to be taught by $B$.
For each faculty member $a$ and course $b$, let there be a weight $w_{\{a,b\}} \in [0, 1]$ indicating how frequently course $b$ should be taught by $a$ on average. These numbers could be derived using a process such as probabilistic serial~\citep{BM01}, the Hylland-Zeckhauser mechanism~\citep{HZ79}, or the mechanisms by \citet{BCKM13}, which would transform preferences of the faculty over which courses to teach into such weights. (Although these mechanisms are formulated for goods, they can be applied to bads when the number of bads allocated to each agent is fixed, as it is when allocating courses to faculty or shifts to workers.)
We will allow arbitrary fractional degrees on the faculty side (so one person can teach multiple courses) while assuming that the fractional degree of any course $b$ is at most $1$.

Applying cumulative rounding to this graph (using the same edge weights in each period) for consecutive semesters $1 \leq t \leq T$, we get the following properties.
\begin{itemize}
    \item Marginal distribution implies that, in each semester, faculty member $a$ has a probability $w_{\{a, b\}}$ of teaching course $b$.
    \item Degree preservation on the course side means that a course is never taught by two different faculty members in the same semester.
    \item Degree preservation on the faculty side implies that a faculty member $a$'s teaching load does not vary by more than 1 between semesters; it is either the floor or the ceiling of $a$'s expected teaching load.
    \item Cumulative degree preservation on the course side ensures that courses are offered with some regularity. For example, if a course's fractional degree is $1/2$, it will be taught exactly once in every academic year (either in Fall or in Spring).
    \item Cumulative degree preservation on the faculty side allows for non-integer teaching load. For example, a faculty member with fractional degree $1.5$ will have a ``2-1'' teaching load, i.e., they will teach 3 courses per year, either 2 in the Fall and 1 in the Spring or vice versa.
\end{itemize}

The same approach is applicable for matching workers to shifts.

One could also use cumulative rounding to repeatedly round a fractional assignment of general chores, such as the ones computed by the online platform spliddit.org~\citep{GP14}.
In this case, a caveat is that (cumulative) degree preservation only ensures that the \emph{number} of assigned chores is close to its expected number per time period, not necessarily the \emph{cost} of the assigned chores.
However, if many chores are allocated per time step, and if costs are additive, then an agent's per-timestep cost is well-concentrated, which follows from the negative-correlation property that permits the application of Chernoff concentration bounds~\citep{PS97}.

\section{Discussion}
\label{sec:discussion}
Though our work is motivated by the application of apportioning seats at random, the technical questions we posed and addressed are fundamental to the theoretical study of apportionment.
In a sense, any deterministic apportionment solution is ``unproportional''\emdash{}after all, its role is to decide which agents receive more or fewer seats than their standard quota.
By searching for randomized methods satisfying ex ante proportionality, we ask whether these unproportional solutions can be combined (through random choice) such that these deviations from proportionality cancel out to achieve perfect proportionality, and whether this remains possible when we restrict the solutions to those satisfying subsets of the axioms population monotonicity, house monotonicity, and quota.
Naturally, this objective pushes us to better understand the whole space of solutions satisfying these subsets of axioms, including the space's more extreme elements.
Therefore, it is in hindsight not surprising that our work led to new insights for deterministic apportionment: a more robust impossibility between population monotonicity and quota (\cref{prop:quotapopmon}), an exploration of solutions generalizing the divisor solutions (\cref{thm:proppopmon}), and a geometric characterization of house monotone and quota compliant solutions (\cref{prop:quotatone}).

Concerning the cumulative rounding technique introduced in this paper, we have only scratched the surface in exploring its applications.
In particular, we hope to investigate whether cumulative rounding can extend existing algorithmic results that use dependent rounding, and whether it can be used to construct new approximation algorithms.
For both of these purposes, the negative-correlation property, which we have not used much so far, will hopefully turn out to be valuable.

Despite their advantageous properties, randomized mechanisms have in the past often met with resistance by practitioners and the public~\citep{KPS18}, but we see signs of a shift in attitudes.
\emph{Citizens' assemblies}, deliberative forums composed of a random sample of citizens, are quickly gaining usage around the world~\citep{OECD20} and proudly point to their random selection\emdash often carried out using complex algorithms from computer science~\citep{FGG+21}\emdash as a source of legitimacy.
If this trend continues, randomness will be associated by the public with neutrality and fairness, not with haphazardness.
Hence, randomized apportionment methods (though, perhaps, simpler ones than the ones we developed here) might yet receive serious consideration.

\subsection*{Acknowledgements}
We thank Bailey Flanigan, Hadi Hosseini, David Wajc, and Peyton Young for helpful discussions.
This work was partially supported by the National Science Foundation under grants IIS-2147187, IIS-2229881 and CCF-2007080 and by the Office of Naval Research under grant N00014-20-1-2488.
Part of this work was done while P.G.\ was visiting the Simons Institute for the Theory of Computing; he gratefully acknowledges the NSF's support of FODSI through grant DMS-2023505.

\bibliographystyle{ACM-Reference-Format}

\newpage
\appendix

\section*{\LARGE Proofs of Statements}

\section{Pitfalls in the Development of House Monotone Methods}
\label{app:pitfalls}
In \cref{sec:pitfalls}, we claimed that for the population profile $\vec{p} = (45, 25, 15, 15)$ and house size $h=3$\emdash{}thus, the standard quotas $(1.35, 0.75, 0.45, 0.45)$\emdash{}the following distribution over apportionments can be part of an apportionment method in which house monotonicity, quota, and ex ante proportionality hold across the inputs $\{(\vec{p}, h') \mid h' \leq 3\}$:
\[ \vec{a} = \left\{\begin{array}{@{}ll@{\qquad}ll@{}}
(2, 1, 0, 0) & \text{with probability $35\%$,} &  (1, 1, 0, 1) & \text{with probability $20\%$, and} \\
(1, 1, 1, 0) & \text{with probability $20\%$,} & (1, 0, 1, 1) & \text{with probability $25\%$.}
\end{array}\right.
\]
To obtain such an apportionment method, consider the following capacitated flow network:
\begin{center}
\begin{tikzpicture}%
    \foreach \row/\col/\sone/\stwo/\sthree/\sfour in {1/1/1/0/0/0,2/1/0/1/0/0,3/1/0/0/1/0,4/1/0/0/0/1,1/2/1/1/0/0,2/2/1/0/1/0,3/2/1/0/0/1,4/2/0/1/1/0,5/2/0/1/0/1,1/3/2/1/0/0,2/3/1/1/1/0,3/3/1/1/0/1,4/3/1/0/1/1} {
    \node at (4*\col, {\if\col2 -2*(\row - 1) \else -(8/3)*(\row-1) \fi}) (s\row\col) {$(\sone, \stwo, \sthree, \sfour)$};
    }
    \foreach \row/\perc in {1/45,2/25,3/15,4/15} {
    \node [left=of s\row1] (s\row0) {};
    \draw (s\row0) [-stealth]  -- node [above] {\perc\%} (s\row1);
    }
    \foreach \row/\perc in {1/35,2/20,3/20,4/25} {
        \node [right=of s\row3] (s\row4) {};
    \draw (s\row3)  [-stealth] -- node [above] {\perc\%} (s\row4);
    }
    \foreach \fromcol/\tocol/\fromrow/\torow/\perc in {1/2/1/1/20,1/2/1/2/12.5,1/2/1/3/12.5,1/2/2/1/20,1/2/2/4/2.5,1/2/2/5/2.5,1/2/3/2/12.5,1/2/3/4/2.5,1/2/4/3/12.5,1/2/4/5/2.5,2/3/1/1/35,2/3/1/2/2.5,2/3/1/3/2.5,2/3/2/2/12.5,2/3/2/4/12.5,2/3/3/3/12.5,2/3/3/4/12.5,2/3/4/2/5,2/3/5/3/5} {
     \draw (s\fromrow\fromcol) [-stealth] -- node [pos=0.15, sloped,below] {\perc\%} (s\torow\tocol);
    }
\end{tikzpicture}
\end{center}
One easily verifies that it is possible to send a total flow of 1 through this network, which necessarily uses all edges at their capacity.
Consider any decomposition of this flow into paths. Our method will be defined as (1) choosing one of these paths $\vec{a}_1 \to \vec{a}_2 \to \vec{a}_3$ with probability proportional to its amount of flow and (2) returning a solution $f$ such that $f(\vec{p},j)= \vec{a}_j$ for $j=1,2,3$, and some canonical apportionment for all other inputs.

Across the inputs $\{(\vec{p}, h') \mid h' \leq 3\}$, this method does not violate house monotonicity since edges in the flow network are such that no agent's seat number decreases along an edge.
Since one verifies that all apportionments labeling the nodes of the flow network satisfy quota, the method satisfies quota on $\{(\vec{p}, h') \mid h' \leq 3\}$.
On the same set of inputs, the method satisfies ex-ante proportionality, which follows from the fact that, for $h'=1,2,3$, weighting the apportionments of the $h'$-th layer of the flow network by their internal flow, we obtain the vector of standard quotas for $\vec{p}$ and house size $h'$.
The egress edges also ensure that, indeed, the method's distribution over apportionments on $\vec{p}, h$ is as given above.

Finally, we must prove that none of the apportionments in the last layer of the flow network are toxic.
For this, observe that the quota solution by \citet{BY75} (which is house monotone and satisfies quota), for one way of breaking ties in the definition, produces the following apportionments on $\vec{p}$:
$(1,0,0,0) \to (1, 1, 0, 0) \to (2, 1, 0, 0) \to (2, 1, 1, 0) \to (2, 1, 1, 1), \dots$.
Since $(2, 1, 0, 0)$ coincides with one of these values, it can be extended by the suffix of the quota solution and is therefore not toxic.
Furthermore, we can extend $(1, 1, 1, 0) \to (2, 1, 1, 0)$, and then continue as the quota solution; $(1, 1, 0, 1) \to (2, 1, 0, 1) \to (2, 1, 1, 1)$, and then as the quota solution; and $(1, 0, 1, 1) \to (1, 1, 1, 1) \to (2, 1, 1, 1)$, and then as the quota solution.
The claim follows by verifying that these extensions do not violate quota until merging with the quota solution.

\section{Deferred Proofs for House Monotone Apportionment}
\label{app:housemonoproofs}

\lemrepetition*

\begin{proof}
\noindent\emph{``$\Rightarrow$'':} Fix $\alpha$ and some $k \in \natszero$. We must show that the finite seat sequence $\beta^{k+1} \coloneqq \alpha_{k p + 1}, \alpha_{k p + 2} \dots, \alpha_{(k+1) p}$ satisfies quota.
Indeed, for any $1 \leq r \leq p$, the number of seats allocated by $\beta^{k+1}$ to state $i$ at house size $r$ is
\begin{align*}
\left| \{ 1 \leq h' \leq r \mid \beta^{k+1}_{h'} = i \} \right| &= \left| \{ 1 \leq h' \leq r \mid \alpha_{k p + h'} = i \} \right| \\
&= a_i(k p + r) - a_i(k p) \\
&\in \{ \lfloor (k p + r) \, p_i / p \rfloor, \lceil (k p + r) \, p_i / p \rceil\} - k \, p_i \tag*{(since $\alpha$ satisfies quota)} \\
&= \{ \lfloor r \, p_i / p \rfloor + k \, p_i, \lceil r \, p_i / p \rceil + k \, p_i \} - k \, p_i \\
&= \{ \lfloor r \, p_i / p \rfloor, \lceil r \, p_i / p \rceil\},
\end{align*}
which shows that $\alpha$ can be decomposed into finite seat sequences $\{\beta^{k}\}_{k \in \natsone}$ satisfying quota. \medskip

\noindent\emph{``$\Leftarrow$'':} Fix some $h$ and choose $k \coloneqq \lfloor (h - 1)/p \rfloor + 1$ and $r \coloneqq ((h-1) \mathop{\text{mod}} p) + 1$ such that $h = (k - 1) \, p + r$, $k \geq 1$ and $1 \leq r \leq p$.
We will show that $\alpha$'s allocation $a(h)$ on $h$ satisfies quota.
Denoting $\beta^k$'s allocation for a house size $h'$ by $b^k(h')$, it holds for all states $i$ that $a_i(h) = \sum_{k'=1}^{k-1} b^{k'}(p) + b^{k}(r)$.
By quota, $b^{k'}(p) = p_i$ for all $k$, and $b^k(r) \in \{ \lfloor r \, p_i / p \rfloor, \lceil r \, p_i / p \rceil\}$.
Thus, $a_i(h) \in \{(k-1) \, p_i + \lfloor r \, p_i / p \rfloor, (k-1) \, p_i + \lceil r \, p_i / p \rceil \}$.
The conclusion follows since $h \, p_i /p = ((k-1) \, p + r) \, p_i /p = (k-1) \, p_i + r \, p_i / p$.
\end{proof}

\begin{lemma}
\label{lem:perprob}
For any population profile $\vec{p}$, there is a probability distribution $\mathcal{D}$ over finite seat sequences such that one can sample a finite seat sequence $\alpha \sim \mathcal{D}$ in $\mathcal{O}(p^2 \, n^2)$ randomized time, such that all finite seat sequences in the support of $\mathcal{D}$ satisfy quota, and such that, for all states $i$ and $1 \leq h \leq p$,
\[ \mathbb{P}[\alpha_h = i] = p_i / p. \]
\end{lemma}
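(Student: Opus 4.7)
\medskip

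\noindent\textbf{Proof proposal.} The plan is to obtain $\mathcal{D}$ by applying cumulative rounding (\cref{thm:cumulative}) to a very simple bipartite graph and interpreting the resulting random matching as a finite seat sequence.

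First, I would set up the bipartite graph $(A \cup B, E)$ where $A = \{a\}$ consists of a single special node, $B = \{b_1, \ldots, b_n\}$ contains one node per state, and $E = \{\{a, b_i\} \mid i \in N\}$ forms a star. I then let $T \coloneqq p$ be the number of time steps and assign, for every $1 \leq t \leq T$, the edge weight $w^t_{\{a, b_i\}} \coloneqq p_i / p$. Note that $\sum_{i \in N} w^t_{\{a, b_i\}} = 1$, so the fractional degree of $a$ at each time step is exactly $1$, while the fractional degree of $b_i$ at time $t$ is $p_i / p$.

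Next, I would apply cumulative rounding to this weighted graph. By degree preservation on $a$, exactly one edge incident to $a$ is rounded up at each time $t$. This naturally defines $\alpha_t \in N$ to be the unique state $i$ with $X^t_{\{a, b_i\}} = 1$, yielding a finite seat sequence $\alpha = \alpha_1, \ldots, \alpha_p$. The distribution of $\alpha$ over such sequences is the proposed $\mathcal{D}$.

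The key step is to verify the two claimed properties. For the per-position marginal, the marginal-distribution property of cumulative rounding gives $\mathbb{P}[\alpha_h = i] = \mathbb{P}[X^h_{\{a,b_i\}} = 1] = w^h_{\{a, b_i\}} = p_i/p$, as required. For the quota property, observe that the number of seats the sequence $\alpha$ allocates to state $i$ at house size $h$ equals $\sum_{t=1}^{h} \bone\{\alpha_t = i\} = \sum_{t=1}^{h} X^t_{\{a, b_i\}} = \sum_{t=1}^{h} D_{b_i}^t$, since $b_i$ has only one incident edge. Cumulative degree preservation on $b_i$ then yields
\[
\sum_{t=1}^{h} D_{b_i}^t \in \left\{\left\lfloor \sum_{t=1}^h w^t_{\{a,b_i\}} \right\rfloor, \left\lceil \sum_{t=1}^h w^t_{\{a,b_i\}} \right\rceil \right\} = \{\lfloor h\, p_i/p\rfloor, \lceil h\, p_i/p\rceil\},
\]
which is exactly quota.

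Finally, the running time follows by plugging $T = p$, $|A| + |B| = n + 1$, and $|E| = n$ into the $\mathcal{O}(T^2 \cdot (|A|+|B|) \cdot |E|)$ bound of \cref{thm:cumulative}, yielding $\mathcal{O}(p^2 n^2)$ as claimed. I do not anticipate a serious obstacle here: the entire argument is essentially a direct instantiation of cumulative rounding on a star, and the only thing to double-check is that the single special node $a$ has integer fractional degree $1$ at every time (so that exactly one edge is selected per time step, making the reduction to a seat sequence well-defined).
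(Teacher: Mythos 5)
Your proposal is correct and follows essentially the same route as the paper's proof: invoking \cref{thm:cumulative} on a star graph with $T = p$ copies and weights $w^t_{\{a,b_i\}} = p_i/p$, using degree preservation on $a$ to make the seat sequence well-defined, cumulative degree preservation on $b_i$ for quota, and marginal distribution for the per-position probabilities. No gaps.
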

\begin{proof}
As sketched in \cref{sec:cumulativeintro}, we define $\mathcal{D}$ by invoking \cref{thm:cumulative} on a star graph with $A = \{a\}, B = \{b_i \mid i \in N\}$, and  $E = \big\{\{a, b_i\} \mid i \in N\big\}$.
We set $T \coloneqq p$, and, for each $1 \leq t \leq T$ and state $i$, set $w_{\{a, b_i\}}^t \coloneqq p_i/p$.

\Cref{thm:cumulative} now defines a joint distribution over variables $X_e^t$ satisfying marginal distribution, degree preservation, and cumulative degree preservation (as well as negative correlation, which we will not use).
We will describe how each joint realization of the $X_e^t$ can be mapped to a finite seat sequence and that the distribution $\mathcal{D}$ that arises from applying this mapping to the dependent-rounding distribution has the properties claim in the statement.
The running time follows from the running time of applying dependent rounding, and the fact that the transformation for translating the outcome into a finite seat sequence requires only $\mathcal{O}(p \, n)$ time.

For a given joint realization of the $X_e^t$, let $\alpha$ be the finite seat sequence that maps each $h \in \{1, \dots, p\}$ to the state $i$ such that $X_{\{a, b_i\}}^h = 1$.
This definition presupposes that there is exactly one such $i$ for each $h$, which follows from the degree-preservation guarantee for vertex $a$ at time step $h$ and from the fact that $d_a^h = \sum_{i \in N} p_i/p = 1$.
This seat sequence $\alpha$ furthermore satisfies quota, which directly follows from cumulative degree preservation and from $\sum_{t'=1}^h d_{b_i}^{t'} = h \, p_i / p$.
It only remains to show that $\alpha_h$ has value $i$ with a probability of $p_i / p = w_{\{a, b_i\}}^h$ for all $i \in N$ and $1 \leq h \leq p$, but this immediately follows from the marginal distribution guarantee of \cref{thm:cumulative}.
\end{proof}

\thmhousemono*
\begin{proof}
For each population profile $\vec{p}$, \cref{lem:perprob} provides a probability distribution over finite seat sequences for that population profile.
We define the outcomes such that $\omega$ contains, for each $\vec{p}$, a separate, independent $\alpha^{\vec{p}}$ following the distribution from \cref{lem:perprob}.

From now on, we fix an $\omega \in \Omega$, which determines the values of all $\alpha^{\vec{p}}$.
For this $\omega$, we must construct an apportionment solution $f = F^{\omega}$.
For a given input $\vec{p}, h$, let $\alpha$ be the concatenation of infinitely many copies of $\alpha^{\vec{p}}$ as in \cref{lem:repetition2}.
Then, we define $f(\vec{p}, h)$ as the apportionment giving $a_i(h) = |\{1 \leq h' \leq h \mid \alpha(h') = i\}|$ seats to each state $i$.

By \cref{lem:perprob}, $\alpha^{\vec{p}}$ satisfies quota, and, thus, $\alpha$ satisfies quota by \cref{lem:repetition2}, from which it follows immediately that $f$ satisfies quota.
Since $f$ was constructed from a seat sequence allocating one seat at a time, it clearly satisfies house monotonicity.

It remains to argue that $F$ satisfies ex ante proportionality.
Fix any $\vec{p}$ and $h$.
By construction, $F_i(\vec{p}, h) - F_i(\vec{p}, h-1) = \bone\big\{\alpha^{\vec{p}}_{1 + (h\!-\!1 \mathop{\mathrm{mod}} p)} = i\big\}$, setting $F_i(\vec{p}, 0) = 0$.
Hence,
\[\mathbb{E}[F_i(\vec{p}, h) - F_i(\vec{p}, h-1)] = \mathbb{P}[\alpha^{\vec{p}}_{1 + (h\!-\!1 \mathop{\mathrm{mod}} p)} = i] =  p_i / p, \]
where the last equality follows from \cref{lem:perprob}. By linearity of expectation, it follows that,
\[ \mathbb{E}[F_i(\vec{p}, h)] = \sum_{h' = 1}^{h} \mathbb{E}[F_i(\vec{p}, h) - F_i(\vec{p}, h-1)] = h \, p_i / p, \]
which shows ex ante proportionality.
\end{proof}

\propquotatone*
\begin{proof}
We will first show a variant of the theorem, in which the finite seat sequences correspond not to perfect matchings but to perfect $b$-matchings, i.e., where each node is labeled with a target degree in $\natszero$, and where a subset of edges is a perfect $b$-matching when each node has its target degree in the induced subgraph.
We will then show how to modify the graph to obtain the claimed result for perfect matchings.

Then, the bipartite graph is the one to which we applied cumulative rounding in \cref{lem:perprob} (without the weights), with the following (technically necessary) modifications:
\begin{enumerate}
    \item We set each node's target degree to its fractional degree as in \cref{lem:fractionaldegrees}. This is possible since all nodes have integer weight, including the nodes $v^{T:T+1}$ which have weight 1 given that $\sum_{t'=1}^T d_v^{t'}$ is an integer for all nodes $v$ in the underlying graph for the chosen $T=p$. 
    \item Then, we delete all edges with zero weight (to ensure that they are never part of the $b$-matching).
    \item Finally, for each edge with weight 1, we delete the edge and decrement the target degree of both adjacent nodes (simulating the constraint that these edges must be present in any $b$-matching).
\end{enumerate}

The proof of \cref{lem:perprob} indicated a way to map perfect $b$-matchings to finite seat sequences, and we have to show that this mapping is a bijection, i.e., that it is injective and surjective.

To show that the mapping is injective, observe that two perfect $b$-matchings that differ in whether a certain edge $\{v^t, (v')^t\}$ is included lead to different seat sequences.
Furthermore, the characterization of the edges in \cref{fig:roundinginterpretation} (whose correctness follows from the proof of \cref{thm:cumulative} and does not rely on properties of the result of dependent rounding other than those required by our $b$-matchings) implies that the set of edges of shape $\{v^t, (v')^t\}$ in the matching uniquely determines which of the other edges are included in the perfect $b$-matching, which means that there are never multiple perfect $b$-matchings that would be mapped to the same finite seat sequence.

It is more involved to show that the mapping is surjective.
For a given finite seat sequence $\alpha$, we will construct a perfect $b$-matching which is mapped to $\alpha$.
Clearly, each edge $\{a^t, (b_i)^t\}$ is included in the matching iff $\alpha_t = i$ (none of these edges have weight zero or one since $n \geq 2$ and each state has a positive population).
We label all other edges according to the edges' events described in \cref{fig:roundinginterpretation}.
One verifies that, by quota, this step would not have taken any edges with zero weight in the cumulative-rounding graph and would have taken all edges with weight one in the cumulative-rounding graph, which allows us to pretend for ease of exposition that we are producing a $b$-matching on the labeled graph before the preprocessing steps (2) and (3).
One easily verifies that the resulting edge set gives the target degree to all nodes of shape $v^t$, $\twobar{v}^t$, and $v^{t:t+1}$ (including the special cases $v^{0:1}$ and $v^{T:T+1}$).

It only remains to show that the nodes $\onebar{v}^t$ have their target degree, $\left\lfloor \sum_{t'=1}^t d_v^{t'} \right\rfloor - \left\lfloor \sum_{t'=1}^{t-1} d_v^{t'} \right\rfloor - \lfloor d_v^t \rfloor + 1$.
For nodes $v=a^t$, it holds that $d_v^{t'} = D_v^{t'} = 1$ for all $t'$, which means that the target degree is one and indeed only one adjacent edge, namely, $\{\onebar{v}^t, v^{t:t+1}\}$, is taken.

We will now consider the case of a node $v=b_i^t$.
Observe that $d_v^{t'} = p_i/\sum_{j\in N} p_j < 1$ for all $t'$, which means that $\sum_{t''=1}^{t'} d_v^{t''} = t' \, p_i / \sum_{j \in N} p_j$, which is just $i$'s standard quota for house size $t'$, which we will write as $q_i(t')$.
Furthermore, note that $\sum_{t'' = 1}^{t'} D_v^{t'} = a_i(t')$.
With this, the target degree of $v$ is just $\qtwo - \qone  + 1$, and the three edges incident to $v$ are selected if
\begin{enumerate}[(a)] %
\item $\atwo = \aone + 1$ (rather than $\atwo = \aone$),
\item $\aone = \qone + 1$ (rather than $\aone = \qone$), and 
\item $\atwo = \qtwo$ (rather than $\atwo = \qtwo$),
\end{enumerate}
respectively, where the values in parentheses are the only alternatives to the properties, by house monotonicity and quota.
That is, we want to show that, for our house monotone and quota $\alpha$, exactly $\qtwo - \qone + 1 \in \{1, 2\}$ many out of the statements (a), (b), and (c), are true.
In \cref{table:casedistinction}, we rule out all other cases via a case distinction, which shows that we indeed produced a perfect $b$-matching, and that the mapping is surjective.
\begin{sidewaystable}
\centering
\scalebox{.9}{%
\begin{tabular}{l l l l p{6cm}}
\toprule
$\qtwo - \qone + 1$ & (a)? & (b)? & (c)? & Impossibility proof\\
\midrule
2 & \patrue & \pbtrue & \pctrue & $\aone = \qone + 1 = \qtwo = \atwo = \aone + 1$ \\ \addlinespace[0.5em]
2 & \pafalse & \pbfalse & ? & $\qtwo = \qone + 1 = \aone + 1 = \atwo + 1 > \atwo$, violates quota \\ \addlinespace[0.5em]
2 & \patrue & \pbfalse & \pcfalse & $\aone = \qone = \qtwo - 1 = \atwo - 2 = \aone - 1$ \\ \addlinespace[0.5em]
2 & \pafalse & \pbtrue & \pcfalse & $\aone = \qone + 1 = \qtwo = \atwo - 1 = \aone - 1$ \\ \addlinespace[0.5em]
1 & \pafalse & \pbfalse & \pcfalse & $\aone = \qone = \qtwo = \atwo - 1 = \aone - 1$ \\\addlinespace[0.5em]
1 & \patrue & ? & \pctrue & $\aone = \atwo - 1 = \qtwo - 1 = \qone - 1$, violates quota \\\addlinespace[0.5em]
1 & \patrue & \pbtrue & \pcfalse & $\aone = \qone + 1 = \qtwo + 1 = \atwo = \aone + 1$ \\\addlinespace[0.5em]
1 & \pafalse & \pbtrue & \pctrue & $\aone = \qone + 1 = \qtwo + 1 = \atwo + 1 = \aone + 1$ \\
\bottomrule
\end{tabular}}
\caption{Case distinction for proving that nodes of the shape $\Bar{b}_i^t$ have the target degree in the proof of \cref{prop:quotatone}.}
\label{table:casedistinction}
\end{sidewaystable}

The above establishes the one-to-one correspondence to the vertices on the polytope of perfect fractional $b$-matchings.
Though this polytope is very nicely behaved already, we prefer to state the theorem for a classical perfect matching polytope, which is more widely known.
Thus, we will adapt the bipartite graph above such that all nodes have target degree one, while keeping the perfect $b$-matchings in one-to-one correspondence.
First, we remove all nodes with target degrees zero from the graph, which clearly does not change the set of perfect $b$-matchings.
Looking at \cref{lem:fractionaldegrees}, only two kinds of nodes can have a target degree larger than one: nodes $a^t$ and some nodes $\onebar{b}_i^t$.
In fact, the nodes $a^t$ are no problem: While they have degree 2 in the cumulative-rounding construction, one of their adjacent edges, $\{a^t, \twobar{a}^t\}$ had weight 1, and thus the target degree of $a^t$ was already lowered to one in step (3) of the preprocessing.

Thus, once more, the only issue are nodes of the form $\onebar{b}_i^t$, specifically, when their target degree is 2 (it is never higher, as discussed above).
If such a node only has two adjacent edges remaining in the graph these edges must be taken in any perfect $b$-matching, so we can eliminate $\onebar{b}_i^t$ and its neighbors.
Thus, say that the node still has all three adjacent edges, $\{\onebar{b}_i^t, \twobar{b}_i^t\}$, $\{\onebar{b}_i^t, b_i^{t-1:t}\}$, and $\{\onebar{b}_i^t, b_i^{t:t+1}\}$.
In this case, replace node $\onebar{b}_i^t$ by two fresh nodes $n_1$ and $n_2$, both with target degree 1, and connect these nodes using four edges $(n_1, \twobar{b}_i^t), (n_1, b_i^{t-1:t}), (n_2, b_i^{t-1:t}), (n_2, b_i^{t:t+1})$.
One verifies that, that in any perfect $b$-matching on the graph before replacement, one can replace the two edges incident to $\onebar{b}_i^t$ by exactly one subset of the new edges to obtain a perfect $b$-matching on the new graph, and that the analogous step in the other direction also works in one unique way.
Thus, after making these replacements, the finite seat sequences satisfying quota correspond one-to-one to the perfect matchings of the graph, which are the corner points of the polytope of fractional perfect matchings by the \bvn{} Theorem.
\end{proof}

\section{Deferred Proofs for Cumulative Rounding}
\label{app:cumulativeproofs}

\lembipartite*
\begin{proof}
Note that the set of nodes
\[ \{ a^t \mid a \in A, 1 \!\leq\! t \!\leq\! T\} \cup \{\onebar{a}^t \mid a \in A, 1 \!\leq\! t \!\leq\! T\} \cup \{\twobar{b}^t \mid b \in B, 1 \!\leq\! t \!\leq\! T\} \cup \{ b^{t:t+1} \mid b \in B, 0 \!\leq\! t \!\leq\! T \}\]
has no internal edges, and neither does the complement of this set.
\end{proof}

\lemweightunit*
\begin{proof}
If the edge has the shape $\{v^t, (v')^t\}$ for some $v, v' \in A \cup B$, then the edge weight is one of the $w_e^t$, which are in $[0,1]$ by assumption.
All other edge weights either have the shape $x - \lfloor x \rfloor$ or the shape $1 - x + \lfloor x \rfloor = 1 - (x - \lfloor x \rfloor)$ for some $x \in \mathbb{R}$.
The claim follows since $x - 1 < \lfloor x \rfloor \leq x$.
\end{proof}

\lemfractionaldegrees*
\begin{proof}
Within this proof, denote by $\frdeg(\cdot)$ the fractional degree of a vertex in the constructed graph.
\begin{align*}
 \frdeg(v^t) &= \underbrace{\textstyle{\sum_{v \in e \in E} w_e^t}}_{=d_v^t}
+ \left(1 - d_v^t + \lfloor d_v^t \rfloor\right) = \lfloor d_v^t \rfloor + 1 \\
    \frdeg(\onebar{v}^t) &= \left( d_v^t - \lfloor d_v^t \rfloor \right)
    + \left( \textstyle\sum_{t'=1}^{t-1} d_v^{t'} - \lfloor \textstyle\sum_{t'=1}^{t-1} d_v^{t'}\rfloor \right)
    + \left( 1 - \textstyle\sum_{t'=1}^t d_v^{t'} + \lfloor \textstyle\sum_{t'=1}^t d_v^{t'} \rfloor \right) \\
    &= \lfloor \textstyle\sum_{t'=1}^t d_v^{t'} \rfloor - \lfloor d_v^t \rfloor 
     - \lfloor \textstyle\sum_{t'=1}^{t-1} d_v^{t'}\rfloor 
    +  1 \\
    \frdeg(\twobar{v}^t) &= \left( 1 - d_v^t + \lfloor d_v^t \right) + \left( d_v^t - \lfloor d_v^t \rfloor \right) = 1 \\
    \frdeg(v^{t:t+1}) &= \left( 1 - \textstyle\sum_{t'=1}^t d_v^{t'} + \lfloor \textstyle\sum_{t'=1}^t d_v^{t'} \rfloor \right) + \left( \textstyle\sum_{t'=1}^t d_v^{t'} - \lfloor \textstyle\sum_{t'=1}^t d_v^{t'} \rfloor \right) = 1  \quad\text{(if $1 \leq t \leq T-1$)} \\
    \frdeg(v^{0:1}) &= \textstyle\sum_{t'=1}^0 d_v^{t'} - \lfloor \textstyle\sum_{t'=1}^0 d_v^{t'} \rfloor = 0 - \lfloor 0 \rfloor = 0 \qedhere
\end{align*}

\end{proof}

\end{document}